\newcommand{\keywords}[1]{\par\addvspace\baselineskip
\noindent\keywordname\enspace\ignorespaces#1}
\newcommand{\A}{\mathcal{A}}
\newcommand{\N}{\mathbb{N}}
\newcommand{\thmA}{{C_0}}
\newcommand{\thmB}{{C_1}}
\newcommand{\thmC}{{C_2}}
\newcommand{\pop}{{\rm {\bf pop}}}
\newcommand{\weight}{{\rm {\bf weight}}}
\newcommand{\midp}{{\rm {\bf midpoint}}}
\newcommand{\depth}{{\rm {\bf depth}}}
\newcommand{\densify}{{\rm {\bf densify}}}
\newcommand{\starts}{{\rm {\bf start}}}
\newcommand{\sends}{{\rm {\bf end}}}
\renewenvironment{proof}{\noindent{\it Proof. }} {{\qed}}
\newenvironment{proofof}[1]{\noindent{\it Proof of #1. }} {{\qed}}
\begin{document}

\mainmatter  

\title{On Online Labeling with Polynomially Many Labels}

%
%
\author{Martin Babka\inst{1}%
\and Jan Bul\'anek\inst{1,2}%
\and Vladim\'ir \v{C}un\'at\inst{1}
\thanks{The first three authors gratefully acknowledge a support by the Charles University Grant Agency (grant No. 265 111 and 344 711) and SVV project no. 265 314.}%
\and Michal Kouck\'y\inst{2,3}
\thanks{Currently a visiting associate professor at Aarhus University, partially supported by the Sino-Danish Center CTIC (funded under the grant 61061130540). Supported in part by GA \v{C}R P202/10/0854, grant IAA100190902 of GA AV \v{C}R, Center of Excellence CE-ITI (P202/12/G061 of GA \v{C}R) and RVO: 67985840.}
\and Michael Saks\inst{4}
\thanks{The work of this author was done while on sabbatical at Princeton University and
was also supported in part by NSF under  grant CCF-0832787.}
}

\institute{Faculty of Mathematics and Physics, Charles University, Prague\\
\and Institute of Mathematics, Academy of Sciences, Prague\\
\and Department of Computer Science, Aarhus University\\
\and Department of Mathematics, Rutgers University\\\vspace{2mm} 
\path| {babkys,vcunat}@gmail, {bulda,koucky}@math.cas.cz, saks@math.rutgers.edu |}

\authorrunning{On Online Labeling with Polynomially Many Labels}


%
%

\toctitle{Lecture Notes in Computer Science}
\tocauthor{Authors' Instructions}
\maketitle

\begin{abstract}

In the online labeling problem with parameters $n$ and $m$
we are presented with a sequence of $n$ {\em keys} from a totally ordered universe $U$
and must assign each arriving key a label from the label set $\{1,2,\dots,m\}$ so that the order of labels (strictly) respects the ordering on $U$.
As new keys arrive it may be necessary to change the labels of some items; such changes may be done at any time at unit cost
for each change.    
The goal is to minimize
the total cost.
An alternative formulation of this problem is
the \emph{file maintenance problem}, in which the items, instead of being labeled, 
are maintained in sorted order in an  array of length $m$, and we pay unit cost for moving an item. 

For the case $m=cn$ for constant $c>1$, 
there are known algorithms that use at most $O(n \log(n)^2)$ relabelings in total \cite{Itaietal}, 
and it was shown recently that this is asymptotically optimal \cite{BKS}. For the case of $m=\theta(n^C)$ for $C>1$,
algorithms are known that use $O(n \log n)$ relabelings. A matching lower bound was claimed in \cite{DSZ04}.  That
proof involved two distinct steps: a lower bound for a problem they call {\em prefix bucketing}
and a reduction from  prefix bucketing to online labeling. The reduction seems to be incorrect, leaving a (seemingly significant)
gap in the proof.
In this paper we close the gap by presenting a correct reduction to prefix bucketing.  Furthermore
we give a simplified and improved analysis of the prefix bucketing lower bound.  This improvement allows us to extend the
lower bounds for online labeling to the case where the number $m$ of labels is superpolynomial in $n$.
In particular, for superpolynomial $m$ we get an asymptotically optimal lower  bound 
$\Omega((n \log n) / (\log \log m - \log\log n))$.

\keywords{online labeling, file maintenance problem, lower bounds.}
\end{abstract}

\section{Introduction}

In the online labeling problem with parameters $n,m,r$, 
we are presented with a sequence of $n$ {\em keys} from a totally ordered universe $U$ of size $r$
and must assign each arriving key a label from the label set $\{1,2,\dots,m\}$ so that the order of labels (strictly) respects the ordering on $U$.
As new keys arrive it may be necessary to change the labels of some items; such changes may be done at any time at unit cost
for each change.    
The goal is to minimize
the total cost.
An alternative formulation of this problem is
the \emph{file maintenance problem}, in which the items, instead of being labeled, 
are maintained in sorted order in an  array of length $m$, and we pay unit cost for moving an item. 

The problem, which was introduced by Itai, Konheim and Rodeh \cite{Itaietal}, 
is natural and intuitively appealing, and has had applications to the design of data structures (see for example
the discussion in \cite{DSZ04}, and the more recent work on cache-oblivious data structures 
\cite{BenderetalB-Tree,Brodaletal,BDIW}).  A connection between this problem and distributed resource allocation
was recently shown by Emek and Korman \cite{EK11}.

The parameter $m$, the {\em label space} must be at least the number of items $n$ or else no valid labeling is possible.  
There are two natural range of parameters that have received the most attention. In the case of {\em linearly many labels} we have
$m=cn$ for some $c>1$, and in the case of {\em polynomially many labels} we have $m=\theta(n^{C})$ for some constant $C>1$.  
The problem is trivial if the universe $U$ is a set of size at most $m$, since then we can simply fix an order preserving
bijection from $U$ to $\{1,\ldots,m\}$ in advance.  In this paper we will usually restrict attention to the case that  $U$
is a totally ordered set of size at least exponential in $n$ (as is typical in the literature).  

Itai et al. \cite{Itaietal} gave an algorithm for the case of linearly many labels having worst case total cost 
$O(n \log(n)^2)$.  Improvements and simplifications were given by Willard \cite{Willard} and Bender et al. \cite{Benderetal}. 
In the special case that $m=n$, algorithms with cost $O(\log(n)^3)$ per item were given
\cite{Zhang,BirdSadnicki}.  It is also
well known that the  algorithm of Itai et al. can be adapted to give total cost $O(n \log(n))$ in the case
of polynomially many labels.  
All of these algorithms make no restriction on the size of
universe $U$ of keys.

For lower bounds, a subset of the present authors 
recently proved \cite{BKS} a tight $\Omega(n \log(n)^2)$ lower bound for the case of linearly many labels and tight bound $\Omega(n \log(n)^3)$ for
the case $m=n$. These bounds hold even when the size of the universe $U$ is only a constant multiple of $m$.  The bound remains
non-trivial (superlinear in $n$) for 
$m  = O(n \log(n)^{2-\varepsilon})$ but
becomes trivial  for $m \in \Omega(n \log(n)^2)$. 

For the case of polynomially many labels,
Dietz at al. \cite{DSZ04} (also in \cite{Zhang}) claim a matching lower bound for the $O(n \log(n))$ upper bound.
Their result consists of two parts; a lower bound for a problem they call {\em prefix bucketing}
and a reduction from  prefix bucketing to online labeling.  However, the argument giving the reduction seems to be incorrect, 
and we recently raised our concerns with one of the authors (Seiferas), who agrees that
there is a gap in the proof. 

This paper makes the following contributions:

\begin{itemize}
\item We provide a correct reduction from prefix bucketing to  online labeling, which closes the gap in the lower
bound proof for online labeling for the case of polynomially many labels.
\item We provide a simpler and more precise lower bound for prefix bucketing which allows us to extend the lower bounds
for online labeling to the case where the label size is as large as $2^{n^\epsilon}$.
Specifically we prove a lower bound of $\Omega((n \log n) / (\log \log m - \log \log n))$ that is valid
for $m$ between $n^{1+\varepsilon}$ and $2^{n^{\varepsilon}}$.   Note that for
polynomially many labels this reduces to $\Omega(n \log(n))$. 
\end{itemize}

We remark that, unlike the bounds of \cite{BKS} for the case of linearly many labels, our lower bound proof requires that the universe
$U$ is at least exponential in $n$. It is an interesting question whether one could design
a better online labeling algorithm for $U$ of size say $m \log n$. We summarize known results in Table \ref{tab-1}. All the results
are for deterministic algorithms. There are no known results for randomized algorithms except for what is implied by the deterministic case.

\begin{table}
\centering
\caption{Summary of known bounds for the online labeling problem.} \label{tab-1}
\begin{tabular}{l lr lr}
\toprule
Array size ($m$)            & Lower bound                                                      &                             & Upper bound                        & \\ \toprule
$m = n$                     & $\Omega\!\left(n \log(n)^3\right)$                               & \cite{BKS}\hspace{.5cm}     & $O\!\left(n \log(n)^3\right)$      & \cite{Zhang} \\ \midrule
$m = \Theta(n), m > n\,\,\,\,\,\,\,$      & $\Omega\!\left(n \log(n)^2\right)$                 & \cite{BKS}\hspace{.5cm}     & $O\!\left(n \log(n)^2\right)$      & \cite{Itaietal} \\ \midrule
$m = n ^ {1 + o(1)}$        & $\Omega\!\left(\frac{n\log (n)^2}{\log m - \log n}\right)$       & \cite{manuscript}\hspace{.5cm}  & $O\!\left(\frac{n \log (n)^2}{\log m - \log n}\right)$ & \cite{Itaietal} \\ \midrule
$m = n ^ {1 + \Theta(1)}$   & $\Omega\!\left(n\log (n)\right)$                                 & [this paper]\hspace{.5cm}   & $O\!\left(n \log (n)\right)$       & \cite{Itaietal} \\ \midrule
$m = n ^ {\Omega(\log(n))}$ & $\Omega\!\left(\frac{n \log n}{\log \log m}\right)$              & [this paper]\hspace{.5cm}   & $O\!\left(\frac{n \log n}{\log \log m}\right)$       & \cite{BKS} \\ \bottomrule
\end{tabular}
\label{table:table_bounds}
\end{table}

Our proof follows the high level structure of the proof from \cite{DSZ04}.  In the remainder of the introduction we sketch the two
parts, and relate our proof to the one in \cite{DSZ04}.

\subsection{Reducing Online Labeling to Prefix Bucketing}

Dietz et al. \cite{DSZ04} sketched a reduction from online labeling to prefix bucketing.  In their reduction they describe an adversary
for the labeling problem.  They show that given any algorithm for online labeling, the behavior of the algorithm
against the adversary can be used to construct a strategy for prefix bucketing.  If one can show that the cost of the derived bucketing strategy is no more than a constant times  the cost paid by the algorithm for relabelings then a lower bound on bucketing
will give a similar lower bound on the cost of any relabeling algorithm.  Unfortunately, their proof sketch does not show this.
In particular, a single relabeling step may correspond to a bucketing step whose cost is $\Omega(\log(n))$, and this
undermines the reduction. 
This may happen when inserting $\Theta(\log n)$ items into an empty segment of size $n^\epsilon$ without triggering any relabelings.
We construct a different adversary for which one gets the needed correspondence
between relabeling cost and bucketing steps.

Our goal in constructing an adversary is to force any online algorithm to perform many relabelings during insertion on $n$ keys. 
The adversary is described
in detail in Section \ref{s-olp} here we provide a high level description.

The adversary starts by inserting the minimal and maximal element of $U$, i.e. $1$ and $r$, and five other keys uniformly spread in $U$.
From then on the adversary will always pick a suitable pair of existing consecutive keys and the next inserted key
will be the average of the pair. Provided that $r>2^n$ there will always be an unused key between the pair.

It is illuminating to think of the problem in terms of the file maintenance problem
mentioned earlier.  In this reformulation we associate to the label space $[m]$ an array indexed by $\{1,\ldots,m\}$ and think of
a key labeled by $j$ as being stored in location $j$. Intuitively, 
the adversary wants to choose two consecutive keys that appear in a crowded region of this array.  By doing this repeatedly,
the adversary hopes to force
the algorithm to move many keys within the array (which corresponds to relabeling them). 
The problem is to make precise the notion of ``crowdedness''.  Crowding within the array occurs at different scales (so a small crowded region may lie inside
a large uncrowded region)  and we need to find a pair of consecutive keys with the property that all regions containing the pair are
somewhat crowded.

With the array picture in mind,
we  call an interval of labels a \emph{segment}, and say that a label is {\em occupied} if there is a key assigned to it. 
The \emph{density} of a segment is the fraction of occupied labels.   

In \cite{DSZ04}, the authors show that there is always a \emph{dense point}, which is
a point  with the property that every segment containing it
has density at least half the overall density of the label space.  They use this as the basis for their
adversary, but this adversary does not seem to be adequate.

We design a different adversary (which is related to the adversary constructed in \cite{BKS} to handle
the case of linearly many labels).
The adversary maintains a sequence (\emph{hierarchy}) of nested segments.  Each successive segment in the hierarchy
has size at most half the previous segment, and its density  is within a constant factor of the density of the previous segment.
The hierarchy ends with a segment having between 2 and 7 keys.
 The key chosen to be inserted next is the average (rounded down) of the two ``middle'' keys of the smallest segment.

For the first 8 insertions, the hierarchy consists only of the single segment $[m]$.
After each subsequent insertion,
the algorithm $\A$ specifies the label of the next key and (possibly) relabels some keys. The adversary then updates
its hierarchy.  For the hierarchy prior to the insertion, define the critical segment to be the smallest segment
of the hierarchy that contains the label assigned to the inserted key and the old and new labels
of all keys that were relabeled.  The new hierarchy agrees with the previous hierarchy up to and including
the critical segment.   Beginning from the critical segment the hierarchy is extended as follows.
Having chosen segment $S$ for the hierarchy, define its 
{\em left buffer} 
to be the smallest subsegment of $S$ that starts at the minimum label of $S$ and
includes at least 1/8 of the occupied labels of $S$, and its {\em right buffer} to be the smallest subsegment that ends at the maximum label of $S$
and includes at least 1/8 of the occupied keys of $S$.
Let $S'$ be the segment obtained from $S$ by
deleting the left and right buffers. The successor segment of $S$ in the hierarchy is
the shortest subsegment of $S'$ that contains exactly half (rounded down) of the occupied labels of $S'$.
The hierarchy ends when we reach a segment with at most seven occupied labels; such a segment necessarily has at least two occupied
labels. 

It remains to prove that the algorithm will make lot of relabels on the sequence
of keys produced by the adversary. 
For that proof we need a game introduced by Dietz et al. \cite{DSZ04} that they
call prefix bucketing.

A \emph{prefix bucketing of $n$ items into $k$  buckets} (numbered 1 to $k$) is a one player game consisting of $n$ steps. 
At the beginning of the game
all the buckets are empty. In each step a new item arrives and the player selects an index $p \in [k]$.
The new item as well as all items in buckets $p+1,\ldots,k$ are moved into bucket $p$ at a cost equal to the
total number of items in bucket $p$ after the move.  The goal is to minimize the total cost of $n$ steps 
of the game. 
Notice that suffix bucketing would be a more appropriate name for our variant of the game, however we keep the original name as the games are equivalent.

We will show that if $\A$ is any algorithm for online labeling and we run $\A$ against our adversary then
the behavior of $\A$ corresponds to a prefix bucketing of $n$ items into $k=\lceil \log m \rceil$ buckets.
The total cost of the prefix bucketing will be within a constant factor of  the total number of relabelings
performed by the online labeling algorithm. Hence, a lower bound on the cost of a prefix bucketing of $n$ items into $k$ 
buckets will imply a lower bound on the cost of the algorithm against our adversary. 

Given the execution of $\A$ against our adversary we create 
the following prefix bucketing.  We maintain a \emph{level} for each key inserted by the adversary; one invariant
these levels will satisfy is that for each segment in the hierarchy, the level of the keys inside
the segment are at least the depth of the segment in the hierarchy.
The level of a newly inserted key is initially $k$. After inserting the key, the algorithm
does its relabeling, which determines the critical segment and the critical level.
All keys in the current critical segment whose level exceeds the critical level have their levels
reset to the current critical level.   

The assignment of levels to keys corresponds to a bucketing strategy, where the level of a key is the bucket
it belongs to.   Hence, if $p$ is the critical level, all the keys from 
buckets $p,p+1,\dots,k$ will be merged into the bucket $p$. 

We need to show that 
the cost of the merge operation corresponds to the number of relabelings done by the online algorithm
at a given time step.  For this we  make the assumption (which can be shown to hold without loss of generality) that the algorithm is \emph{lazy}, which means
that at each time step the set of keys that are relabeled is a contiguous block of keys that includes
the newly inserted keys.    The cost of the bucketing merge step is at most the number of keys in
the critical segment.  One can argue that for each successor segment of the critical segment, either all labels in its left buffer
or all labels in its right buffer were reassigned, and the total number of such keys is a constant fraction of the keys
in the critical segment.

\subsection{An Improved Analysis of Bucketing}

It then remains to give a lower bound
for the cost of prefix bucketing.  This was previously given by Dietz et al. \cite{DSZ04} for $k\in \Theta(\log n)$. 
We give a different and simpler proof that gives asymptotically optimal bound for $k$ between $\log n$ and $O(n^\epsilon)$.
We define a family of trees called $k$-admissible trees and show that the cost
of bucketing  for $n$ and $k$, is between $dn/2$ and $dn$ where $d$ is the minimum depth of a $k$-admissible tree
on $n$ vertices.  We further show that the minimum depth of a $k$-admissible tree on $n$ vertices
is equal $g_k(n)$ which is defined to be the smallest $d$ such that $\binom{k+d-1}{k} \geq n$.  
This gives a  characterization of the optimal cost of prefix bucketing (within a factor of 2).  When we apply
this characterization we need to use estimates of $g_k(n)$ in terms of more familiar functions (Lemma \ref{lm:lower_bound_d}), 
and there is some loss in these estimates.

\section{The Online Labeling Problem}\label{s-olp}

In this paper, interval notation is used for sets of consecutive integers, e.g.,
$[a,b]$ is the set $\{k \in \mathbb{Z}:a \leq k \leq b\}$. Let $m$ and $r\ge 1$ be integers. We assume without loss of generality $U=[r]$.
An online labeling algorithm $\A$ with range $m$ is an algorithm that on input sequence $y^1,y^2,\dots,y^t$ of distinct elements
from $U$ gives
an \emph{allocation $f : \{y^1,y^2,\dots,y^t\} \rightarrow [m]$} that respects the natural ordering of $y^1,\dots,y^t$ 
that is for any $x,y\in \{y^1,y^2,\dots,y^t\}$, $f(x) < f(y)$ if and only if $x < y$. We refer to $y^1,y^2,\dots,y^t$
as \emph{keys}. The trace of $\A$ on a sequence $y^1,y^2,\dots,y^n\in U$ is the sequence $f^0,f^1,f^2,\dots,f^n$ of functions
such that $f^0$ is the empty mapping and for $i=1,\dots,n$, $f^i$ is the output of $\A$ on $y^1,y^2,\dots,y^i$. 
For the trace $f^0,f^1,f^2,\dots,f^n$ and $t=1,\dots,n$, we say that $\A$ \emph{relocates $y\in \{y^1,y^2,\dots,y^t\}$ at time $t$} 
if $f^{t-1}(y)\not=f^t(y)$. So each $y^t$ is relocated at time $t$. 
For the trace $f^0,f^1,f^2,\dots,f^n$ and $t=1,\dots,n$, $Rel^t$ denotes the set of relocated keys at step $t$.
The cost of $\A$ incurred on $y^1,y^2,\dots,y^n\in U$ is $\chi_\A(y_1,\dots,y_n)=\sum_{i=0}^n |Rel^i|$ where $Rel$ is 
measured with respect to the trace of $\A$ on $y^1,y^2,\dots,y^n$. The maximum cost
$\chi_\A(y^1,\dots,y^n)$ over all sequences $y^1,\dots,y^n$ is denoted $\chi_\A(n)$. We write $\chi_m(n)$ for the smallest cost $\chi_\A(n)$ that 
can be achieved by any algorithm $\A$ with range $m$.

\subsection{The Main Theorem}
\label{subsec:main theorem}
In this section, we state our lower bound results for  $\chi_m(n)$.   

\begin{theorem}
\label{thm:main}
There are positive constants $\thmA, \thmB$ and $\thmC$ so that the following holds. Let $m,n$ be integers satisfying
$\thmA \leq n \le m \le 2^{n^\thmB}$. Let the size of $U$ be more than $2^{n+4}$. 
Then $\chi_m(n) \geq \thmC \cdot \frac{n \log n}{3 + \log \log m - \log \log n}$.
\end{theorem}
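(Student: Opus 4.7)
The plan is to follow the two-stage strategy sketched in the introduction: first construct an adversary that produces, for any online labeling algorithm $\A$ with range $m$, an input sequence whose cost can be translated into the cost of a prefix bucketing strategy; then combine this reduction with a tight lower bound on prefix bucketing to obtain the stated bound on $\chi_m(n)$.

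For the adversary, I would seed the universe with $1$, $r$, and five uniformly spaced keys, and maintain after every insertion a hierarchy $[1,m]=S_0\supsetneq S_1\supsetneq\cdots\supsetneq S_\ell$ of nested segments. Each $S_{i+1}$ is constructed from $S_i$ by first removing the left buffer (the smallest prefix of $S_i$ containing at least $\tfrac{1}{8}$ of the occupied labels of $S_i$) and the analogous right buffer, and then taking the shortest subsegment of the remainder that contains exactly half (rounded down) of the occupied labels left after buffer removal. The hierarchy terminates at the first $S_\ell$ with between $2$ and $7$ occupied labels, and the next key inserted is the average of the two middle occupied labels of $S_\ell$; the hypothesis $|U|>2^{n+4}$ guarantees that an unused integer between them always exists. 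After $\A$ responds, one identifies the critical segment $S_i$ as the smallest segment in the previous hierarchy containing the label of the new key and the old and new labels of all relocated keys, discards $S_{i+1},\ldots,S_\ell$, and rebuilds the hierarchy below $S_i$.

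Next I would set $k=\lceil\log m\rceil$ and define a prefix bucketing on the $n$ inserted keys into $k$ buckets. Each new key enters bucket $k$, and at every step with critical segment at depth $p$, all keys inside the critical segment whose bucket index exceeds $p$ are merged into bucket $p$. By induction one proves the invariant that every key contained in a depth-$d$ segment has bucket index at least $d$; since successive hierarchy segments halve in size, the depth never exceeds $k$, so the bucketing is legal. Without loss of generality $\A$ is lazy, meaning the relocated set at each step is a contiguous block containing the new key, and the central lemma of the reduction is that for every discarded $S_j$ with $j>i$ either the entire left buffer or the entire right buffer of $S_j$ lies in this contiguous block: the new critical segment is strictly inside $S_j$, so the relocation block exits $S_{j+1}$ on at least one side, crossing a full buffer of $S_j$. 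Summing the buffer sizes, each at least $\tfrac{1}{8}$ of the occupied labels of $S_j$, shows the number of occupied labels of $S_i$ (which equals the bucketing merge cost) is at most a constant times the number of relocations.

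Finally, I would invoke the bucketing lower bound proved later in the paper, namely that any prefix bucketing of $n$ items into $k$ buckets costs at least $(n/2)\cdot g_k(n)$ where $g_k(n)$ is the least $d$ with $\binom{k+d-1}{k}\ge n$, together with the estimate $g_k(n)=\Omega\!\left(\log n/(3+\log\log m-\log\log n)\right)$ for $k=\lceil\log m\rceil$ and $m\le 2^{n^{C_1}}$ from Lemma \ref{lm:lower_bound_d}. The main obstacle I anticipate is the buffer-loss lemma: one must carefully track how the set of occupied labels in each $S_j$ is split by the new hierarchy and verify that the lazy contiguous relocation block really does swallow one full buffer of every discarded segment, which relies on the precise choice of the constant $\tfrac18$ and on the halving rule, and on preserving the density invariant across the rebuild. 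Getting the constants $\thmA,\thmB,\thmC$ right then amounts to bookkeeping on the initial seven keys, the depth bound $k$, and the constants produced in the buffer-loss and bucketing estimates.
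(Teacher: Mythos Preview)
Your proposal is correct and follows essentially the same route as the paper: the adversary with $\tfrac18$-buffers and the halving rule, the reduction to prefix bucketing on $k=\lceil\log m\rceil$ buckets via the level assignment, the buffer-loss argument for a lazy algorithm, and the appeal to the bucketing lower bound all match Lemmas~\ref{l-main}, \ref{l-b2l}, and~\ref{l-lbb}. The one point to tighten is the invariant you state: to bound the merge cost by $\weight(S_{p})$ you actually need the direction ``keys in bucket $i$ lie in $S_i$'' (this is Lemma~\ref{l-p2}), which is the converse of ``keys in $S_d$ have bucket index at least $d$''; both hold, but it is the former that does the work.
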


To prove the theorem for given algorithm $\A$ we will adversarially construct a sequence $y^1,y^2,\dots,y^t$ of keys that will cause
the algorithm to incur the desired cost. In the next section we will design the adversary.

\subsection{Adversary Construction}

Any interval $[a,b]\subseteq [m]$ is called a \emph{segment}. Fix $n,m,r>1$ such that $m\ge n$ and $r>2^{n+4}$.
Fix some online labeling algorithm $\A$ with range $m$. 
To pick the sequence of keys $y^1,\dots,y^n$, the adversary will maintain a sequence of 
nested segments $S^t_{\depth(t)} \subseteq \cdots \subseteq S^t_2 \subseteq S^t_1 = [m]$, updating them after each time step $t$. 
The adversary will choose the next element $y^{t}$ to fall between the keys in
the smallest interval $S^t_{\depth(t)}$. In what follows, $f^t$ is the allocation of $y^1,\dots,y^t$ by the algorithm $\A$.

The \emph{population of a segment $S$ at time $t$} is $\pop^t(S) = (f^t)^{-1}(S)$ and the 
\emph{weight of $S$ at time $t$} is $\weight^t(S) = |\pop^t(S)|$. 
For $t=0$, we extend the definition by $\pop^0(S) = \emptyset$ and $\weight^0(S) = 0$.  
The \emph{density of $S$ at time $t$} is $\rho^t(S) = \weight^t(S) / |S|$. For a positive 
integer $b$, let $\densify^t(S,b)$ be the smallest subsegment $T$ of $S$ of weight exactly 
$\lfloor (\weight^t(S) - 2b) / 2 \rfloor$ such that $\pop^t(T)$ does not contain any of the $b$ largest 
and smallest elements of $\pop^t(S)$. Hence, $\densify^t(S,b)$ is the densest subsegment of $S$ that contains 
the appropriate number of items but which is surrounded by a large population of $S$ on either side.
If $\pop^t(S)=\{x_1<x_2 <\cdots <x_\ell\}$ then $\midp^t(S)= \lceil (x_{\lceil (\ell-1)/2 \rceil} + x_{\lceil (\ell+1)/2 \rceil} )/ 2 \rceil$.

Let $y^1,y^2,\dots,y^t$ be the first $t$ keys inserted  and let $Rel^t$ be the keys 
that are relabeled by 
$\A$ in response to the insertion of $y^t$. The \emph{busy segment $B^t\subseteq [m]$ at time $t$} is
the smallest segment that contains $f^t(Rel^t)\cup f^{t-1}(Rel^t \setminus \{y^t\})$. 
We say that the algorithm $\A$ is \emph{lazy} if
all the keys that are mapped to $B^t$ are relocated at step $t$, i.e., $(f^t)^{-1}(B^t) = Rel^t$. 
By Proposition~4 in \cite{BKS}, when bounding the cost of $\A$ from below 
we may assume that $\A$ is lazy.

\medskip\noindent {\bf Adversary($\A,n,m,r$)}

\smallskip\noindent Set $p^0=0$.

\smallskip\noindent For $t=1,2,\dots,n$ do
\begin{itemize}
\item
If $t < 8$, let $S^t_1 = [m]$, $b^t_1=1$, $\depth(t)=1$, and $p^{t}=1$.   
Set $y^t = 1 + \lceil (t-1) \cdot (r-1)/6 \rceil$, and run $\A$ on $y^1,y^2,\dots,y^t$ to get $f^t$.
Continue to next $t$.

\item
\emph{Preservation Rule:}
For $i=1,\dots,p^{t-1}$, let $S^t_i=S^{t-1}_{i}$ and $b^t_i=b^{t-1}_i$ . 
(For $t\ge 8$, copy the corresponding segments from the previous time step.)

\item 
\emph{Rebuilding Rule:}

Set $i=p^{t-1}+1$.

While $\weight^{t-1}(S^t_{i-1}) \ge 8$

\begin{itemize}
\item Set $S^t_i = \densify^{t-1}(S^t_{i-1},b^t_{i-1})$.
\item Set $b^t_i = \lceil \weight^{t-1}(S^t_i)/8 \rceil$.
\item Increase $i$ by one.
\end{itemize}

\item
Set $\depth(t)=i-1$.

\item 
Set $y^t = \midp^{t-1}(S^t_{\depth(t)})$.

\item
\emph{The critical level:}
Run $\A$ on $y^1,y^2,\dots,y^t$ to get $f^t$. Calculate $Rel^t$ and $B^t$.
Set the critical level $p^{t}$ to be the largest integer $j\in [\depth(t)]$ such that $B^{t} \subseteq S^{t}_j$.

\end{itemize}

\noindent\emph{Output:} $y^1,y^2,\dots,y^n$.

\medskip
We make the following claim about the adversary which implies Theorem \ref{thm:main}. We did not attempt to optimize
the constants.

\begin{lemma}\label{l-main}
Let $m,n,r$ be integers such that and $2^{32} \le n \le m \le 2^{\sqrt[4]{n}/8}$ and $2^{n+4}<r$. 
Let $\A$ be a lazy online labeling algorithm with the range $m$. 
Let $y^1,y^2,\dots,y^n$ be the output of {\bf Adversary($\A,n,m,r$)}. 
Then the cost $$\chi_\A(y^1,y^2,\dots,y^n) \ge \frac{1}{512} \cdot \frac{n \log n}{ 3 + \log \lceil \log m \rceil - \log \log n} - \frac{n}{6}.$$
\end{lemma}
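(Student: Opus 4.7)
The plan is to reduce the lower bound on $\chi_\A(y^1,\ldots,y^n)$ to the improved prefix bucketing lower bound sketched in the introduction. From the trace of $\A$ against the adversary, I would construct a prefix bucketing of $n$ items into $k = \lceil\log m\rceil + O(1)$ buckets whose total cost is at most a constant multiple of $\chi_\A(y^1,\ldots,y^n)$. Since the prefix bucketing lower bound, expressed via $g_k(n)$ (the minimum $d$ with $\binom{k+d-1}{k}\ge n$), is $\Omega(n\log n/\log k)=\Omega(n\log n/\log\log m)$, the claim of the lemma follows, with the $-n/6$ slack absorbing the $O(n)$ cost of the first seven deterministic insertions and the rounding losses in the estimate of $g_k(n)$.

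Before the reduction I would verify that the adversary is well-defined for every $t\le n$. The initial seven keys partition $[1,r]$ into six gaps of length $\approx (r-1)/6$, and every later $\midp$ step bisects one such gap; since $r>2^{n+4}$, an integer midpoint distinct from existing keys exists at every step. The rebuilding loop calls $\densify$ only while $\weight^{t-1}(S^t_{i-1})\ge 8$, so with $b^t_{i-1}=\lceil\weight^{t-1}(S^t_{i-1})/8\rceil$ a positive middle population remains and a densest subsegment of the required weight exists. I would then establish structural properties of the hierarchy by induction on $i$: nesting $S^t_i\subseteq S^t_{i-1}$; weight halving $\weight^{t-1}(S^t_i)\le \weight^{t-1}(S^t_{i-1})/2$, from the definition of $\densify$; density loss by at most a constant factor, $\rho^{t-1}(S^t_i)\ge (3/4)\,\rho^{t-1}(S^t_{i-1})$, because the middle portion of $S^t_{i-1}$ retains $3/4$ of the weight in a sub-length of $|S^t_{i-1}|$ and $\densify$ picks a shortest (hence densest) subsegment of half that weight; and consequent length decay $|S^t_i|\le (2/3)|S^t_{i-1}|$, which bounds $\depth(t)$ by $O(\log m)$.

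Next I would define the bucketing. Each inserted key starts in bucket $k$; after $\A$ processes $y^t$ and the adversary computes $p^t$, we perform the bucketing move $p\leftarrow p^t$, merging the contents of buckets $p^t,p^t+1,\ldots,k$ into bucket $p^t$. A careful induction on $t$, using the preservation rule (which freezes $S^{\cdot}_j$ for $j\le p^{t-1}$) together with the definitions of the critical segment and the critical level, establishes the invariant that for every $\ell\le\depth(t)$ all items at level $\ge\ell$ lie inside $S^t_\ell$. Consequently the bucketing cost at step $t$, which equals the number of items at level $\ge p^t$ after the merge, is at most $\weight^t(S^t_{p^t})$.

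The main obstacle is the complementary inequality $|Rel^t|\ge c\cdot \weight^t(S^t_{p^t})$ for some absolute constant $c>0$, which converts the bucketing cost bound into a lower bound on relabelings. Since $\A$ is lazy, $\weight^t(B^t)=|Rel^t|$ and $B^t\subseteq S^t_{p^t}$ by the definition of $p^t$; moreover, for every $i$ with $p^t<i\le\depth(t)$ we have $B^t\not\subseteq S^t_i$ while $y^t=\midp^{t-1}(S^t_{\depth(t)})\in S^t_i\cap B^t$, so $B^t$ straddles at least one boundary of each successor $S^t_i$. By tracking this straddling together with the explicit buffer sizes $b^t_i=\lceil\weight^{t-1}(S^t_i)/8\rceil$ and the fact that old labels of relocated keys all lie inside $B^t$, I would show that for every successor segment at least one of its two buffers consists entirely of relocated keys. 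Summing over the $\depth(t)-p^t$ successor segments yields a geometric-type sum $\sum_{i=p^t+1}^{\depth(t)} b^t_i=\Omega(\weight^t(S^t_{p^t}))$, proving the inequality. The boundary case $p^t=\depth(t)$ leaves only $O(1)$ keys in $S^t_{p^t}$ and is absorbed into the additive $-n/6$ slack. Combining the cost bounds with the prefix bucketing lower bound completes the proof.
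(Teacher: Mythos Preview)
Your proposal follows the same two-step strategy as the paper: build a prefix bucketing from the adversary trace with cost bounded by a constant times $\chi_\A$, then invoke the bucketing lower bound. The bucket assignment via levels, the containment invariant (the paper's Lemma~\ref{l-p2}), and the buffer-straddling argument for $|Rel^t|\ge\sum_{i>p^t} b^t_i$ all match the paper.

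There is one genuine gap. You write the buffer sizes as $b^t_i=\lceil\weight^{t-1}(S^t_i)/8\rceil$ and use this to conclude that $\sum_{i>p^t} b^t_i=\Omega(\weight^t(S^t_{p^t}))$ via a ``geometric-type sum''. But that formula for $b^t_i$ holds only for segments \emph{freshly rebuilt} at step $t$ (those with $i>p^{t-1}$). For segments preserved from earlier steps, the Preservation Rule freezes $b^t_i=b^{t-1}_i=\cdots$ at the value $\lceil\weight^{s-1}(S^s_i)/8\rceil$ where $s$ is the step at which $S^t_i$ was last created. Between $s$ and $t$ many keys may have been inserted into $S^t_i$, so $b^t_i$ can be much smaller than $\weight^{t-1}(S^t_i)/8$, and the sum is not obviously geometric in the current weights.

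The paper handles exactly this staleness in Lemma~\ref{l-p1}: it observes that between the creation time of $S^t_{i+1}$ and time $t$ no relocation touches $S^t_i\setminus S^t_{i+1}$ (by the definition of the critical level), so the \emph{weight difference} $\weight^{t-1}(S^t_i)-\weight^{t-1}(S^t_{i+1})$ is the same as at creation time, where it is comparable to $b^t_{i+1}$ by the $\densify$ construction. Telescoping then yields Corollary~\ref{c-weight}, namely $8+64\sum_{j>i} b^t_j\ge\weight^{t-1}(S^t_i)$, which is the inequality you need. Your outline should make this step explicit rather than relying on the incorrect identity $b^t_i=\lceil\weight^{t-1}(S^t_i)/8\rceil$.
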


%

Notice, if $r>2^{n+4}$ then for any $t\in[n-1]$ the smallest pair-wise difference between integers $y^1,y^2,\dots,y^t$ is
at least $2^{n+1-t}$ so $y^{t+1}$ chosen by the adversary is different from all the previous $y$'s. All our analysis will assume
this.
 
To prove the lemma we will design a so called \emph{prefix bucketing game} from the interaction between the
adversary and the algorithm, we will relate the cost of the prefix bucketing to the cost $\chi_\A(y^1,y^2,\dots,y^n)$,
and we will lower bound the cost of the prefix bucketing.

In preparation for this, we prove several useful properties of the adversary.
 
\begin{lemma}
For any $t\in [n]$, $\depth(t)\le \log m$.
\end{lemma}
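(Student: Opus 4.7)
The plan is to combine two facts: (a) each successive segment in the hierarchy has size at most half of its predecessor, and (b) the deepest segment $S^t_{\depth(t)}$ has size at least $2$. Together these give $2 \le |S^t_{\depth(t)}| \le m/2^{\depth(t)-1}$, whence $\depth(t) \le \log m$.

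For (a), preserved segments inherit the halving bound from time $t-1$, so the only real work is for segments produced by $\densify$ in the rebuilding rule. Let $S$ be a segment with $\pop^{t-1}(S) = \{x_1 < x_2 < \cdots < x_w\}$, fix $b \ge 1$, and set $k = \lfloor (w-2b)/2 \rfloor$. Consider the ``left-extreme'' and ``right-extreme'' candidates $T_L = [x_{b+1}, x_{b+k}]$ and $T_R = [x_{w-b-k+1}, x_{w-b}]$; each contains exactly $k$ elements of $\pop^{t-1}(S)$ and avoids the top/bottom $b$ of them, so both are valid choices for $\densify^{t-1}(S,b)$. Since the $x_i$ are distinct integers lying in $S$, we have $x_{b+1} \ge \starts(S) + b$ and $x_{w-b} \le \sends(S) - b$, from which a short calculation gives $|T_L|+|T_R| \le |S|-2b+1 \le |S|-1$. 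Hence $|\densify^{t-1}(S,b)| \le \min(|T_L|,|T_R|) \le |S|/2$. Combined with a routine induction confirming that $b^t_i \ge 1$ throughout (using the initialization $b^t_1 = 1$, preservation, and $b^t_i = \lceil w/8 \rceil \ge 1$ whenever the rebuild loop fires with $w \ge 8$), this yields $|S^t_i| \le m/2^{i-1}$ for all $i \in [\depth(t)]$.

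For (b), I would split on whether $S^t_{\depth(t)}$ is freshly created. If $\depth(t) > p^{t-1}$, the last iteration of the rebuild loop output a segment of weight $\lfloor (w - 2\lceil w/8 \rceil)/2 \rfloor$ with $w \ge 8$, which a direct calculation shows is at least $2$, so the size is at least $2$ as well. Otherwise $\depth(t) = p^{t-1} \le \depth(t-1)$, and by nestedness of the previous hierarchy $S^t_{\depth(t)} = S^{t-1}_{p^{t-1}} \supseteq S^{t-1}_{\depth(t-1)}$; the latter has size at least $2$ by the inductive hypothesis on $t$ (the base cases $t < 8$ are trivial since then $|S^t_1|=m\ge 2$). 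The main obstacle is the algebra behind the $\densify$ size bound, specifically extracting the ``$-2b+1$'' slack by pinning $T_L$ and $T_R$ against the endpoints of $S$ via the buffer elements $x_1,\dots,x_b$ and $x_{w-b+1},\dots,x_w$.
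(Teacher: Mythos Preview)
Your plan coincides with the paper's: establish that each $S^t_{i+1}$ has at most half the length of $S^t_i$. The paper's halving argument is a bit simpler than your $T_L,T_R$ construction: it removes the two buffers from $S^t_i$ to obtain $S'$, splits $S'$ into its left and right halves $L',R'$ by \emph{label position}, and notes that by pigeonhole one of them already contains at least $\lfloor \weight(S')/2\rfloor$ occupied labels, whence $|\densify|\le|S'|/2<|S^t_i|/2$. The paper does not carry out your step~(b) at all; it is content with the halving alone.

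Two small issues in your writeup. In~(a) you set $\pop^{t-1}(S)=\{x_1<\cdots<x_w\}$, but $\pop$ returns \emph{keys} in $U$, not labels in $[m]$; your intervals $T_L,T_R\subseteq S$ and the assertion ``the $x_i$ are distinct integers lying in $S$'' only make sense if the $x_i$ are the occupied labels $f^{t-1}(\cdot)$. In~(b), when $\depth(t)=p^{t-1}+1$ (exactly one rebuild iteration) the buffer used is the preserved value $b^t_{p^{t-1}}=b^{t-1}_{p^{t-1}}$, which was computed from the weight of $S^t_{p^{t-1}}$ at an earlier time and need not equal $\lceil w/8\rceil$ for the \emph{current} weight $w=\weight^{t-1}(S^t_{p^{t-1}})$; your displayed formula does not apply. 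To close this case you need the extra fact (used in the paper's proof of Lemma~\ref{l-p1}) that the weight of a preserved segment only increases over time, so the old buffer still satisfies $b^t_{p^{t-1}}\le w/4$, giving $\lfloor(w-2b)/2\rfloor\ge\lfloor w/4\rfloor\ge 2$.
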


\begin{proof}
The lemma is immediate for $t <8$.
For $t\ge 8$, it suffices to show that the hierarchy $S^t_1,S^t_2,\ldots$ satisfies that for each $i \in [1,\depth(t)-1]$,
$|S^t_{i+1}| \leq |S^t_{i}|/2$.  Recall that $S^t_i$ is obtained from $S^t_{i-1}$ by removing the left and right
buffer to get a subsegment $S'$ and then taking $S^t_{i+1}$ to be the shortest subsegment of $S'$ that contains
exactly half of the keys (rounded down) labeled in $S'$.  Letting $L'$ be the smallest $\lfloor |S'|/2 \rfloor$
labels of $S'$ and $R'$ be the largest $\lfloor |S'|/2 \rfloor$ labels of $S'$, one of $L'$ and $R'$ contains
at least half of the occupied labels (rounded down) in $S'$, which implies $|S^t_{i+1}| \leq |S'|/2 < |S^t_i|/2$.
\end{proof}

%

\begin{lemma}\label{l-p1}
For any $t\in[n]$ and $i\in[\depth(t)-1]$, $64\cdot b^t_{i+1} \ge \weight^{t-1}(S^t_i) - \weight^{t-1}(S^t_{i+1})$.
\end{lemma}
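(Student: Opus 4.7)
The plan is to reduce the claim to the most recent time at which $S_{i+1}$ was (re)created. Let $s \le t$ be the last time at which the Rebuilding Rule set $S_{i+1}$; by the Preservation Rule, $S^{t'}_{i+1} = S^{s}_{i+1}$ and $b^{t'}_{i+1} = b^{s}_{i+1}$ for every $t' \in [s, t]$. The preservation of level $i+1$ at each intermediate step forces $p^{t'-1} \ge i+1$, hence also $p^{t'-1} \ge i$, so level $i$ is preserved as well and $S^{t'}_i = S^{s}_i$ throughout $[s,t]$.

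Next I would verify that $\weight^{\tau}(S^s_i) - \weight^{\tau}(S^s_{i+1})$ takes the same value for $\tau = s-1$ and $\tau = t-1$. For any $t' \in [s+1, t]$, the critical level property gives $B^{t'-1} \subseteq S^{t'-1}_{p^{t'-1}} \subseteq S^{s}_{i+1}$, and laziness of $\A$ ensures that every key of $Rel^{t'-1}$ lies inside $B^{t'-1}$ both in $f^{t'-2}$ (for pre-existing keys, by the definition of the busy segment) and in $f^{t'-1}$. Consequently every relabeling moves keys strictly within $S^s_{i+1}$, and the single new key $y^{t'-1}$ falls inside $S^s_{i+1}$. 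The weights of $S^s_i$ and $S^s_{i+1}$ therefore both grow by exactly one during the step, preserving the gap.

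It remains to verify the inequality at time $s$. Write $w := \weight^{s-1}(S^{s}_i)$, $w' := \weight^{s-1}(S^{s}_{i+1})$ and $b := b^{s}_i$. The while-loop condition guarantees $w \ge 8$, and the Rebuilding Rule gives $w' = \lfloor (w - 2b)/2 \rfloor$ together with $b^{s}_{i+1} = \lceil w'/8 \rceil$. Applying the preservation argument of the previous paragraph to level $i$ (between its most recent rebuild and time $s$) shows that $b = \lceil w^{\star}/8 \rceil$ for some $w^{\star} \le w$, so $b \le w/8 + 1$. From $w' \ge (w - 2b - 1)/2$ I obtain $w - w' \le w/2 + b + 1$ and $64\, b^{s}_{i+1} \ge 8 w' \ge 4w - 8b - 4$; subtracting and substituting $b \le w/8 + 1$ reduces the inequality to $19w/8 - 14 \ge 0$, which holds because $w \ge 8$.

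The main technical obstacle is the preservation-of-gap argument: one has to combine the critical level definition, the laziness of $\A$, and the Preservation Rule to ensure that during the whole interval $[s, t]$ no key ever escapes from $S^s_{i+1}$ into $S^s_i \setminus S^s_{i+1}$. Once this invariant is in place, the remainder of the proof is a direct arithmetic check at the rebuild time.
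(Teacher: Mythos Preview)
Your proposal is correct and follows essentially the same route as the paper: both reduce to the last rebuild time $s$ of level $i+1$, argue that the gap $\weight(S_i)-\weight(S_{i+1})$ is invariant on $[s,t]$ (each step adds exactly one key to each segment and no relabeling crosses the boundary because $B^{t'-1}\subseteq S_{i+1}$), and then do arithmetic at time $s$ using $b^s_i=\lceil w^\star/8\rceil$ with $w^\star\le w$.

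Two small remarks. First, you do not need laziness for the invariance step: by definition the busy segment $B^{t'-1}$ already contains both the old and new labels of every relocated key, so no key can cross $\partial S^s_{i+1}$ regardless of laziness. Second, your final calculation is more laborious than necessary. The paper simply observes $b^s_i\le w/4$ (from $w^\star\le w$ and $w\ge 8$), hence $w'=\lfloor(w-2b^s_i)/2\rfloor\ge\lfloor w/4\rfloor\ge w/8$, and therefore $64\,b^t_{i+1}\ge 8w'\ge w\ge w-w'$; there is no need to track $w-w'$ separately or to arrive at $19w/8-14\ge 0$.
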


\begin{proof} 
For $t<8$ the lemma is true trivially so we assume that $t\ge 8$.
For any integers $s,s'$ such that $\starts(t,i) \le s < s' < \sends(t,i)$, 
\begin{align*}
\weight^{s-1}(S^t_i) = \weight^{s'-1}(S^t_i) + (s'-s).
\end{align*}
Let $s=\starts(t,i+1)$. Then $\starts(t,i) \le s \le t <  \sends(t,i+1) \le \sends(t,i)$ so
\begin{align*}
\weight^{t-1}(S^t_i) - \weight^{t-1}(S^t_{i+1}) &= \weight^{s-1}(S^t_i) - \weight^{s-1}(S^t_{i+1}) \cr 
&= \weight^{s-1}(S^s_i) - \weight^{s-1}(S^s_{i+1}).
\end{align*}
Also
\begin{align*}
b^t_{i+1} = b^s_{i+1} = \lceil \weight^{s-1}(S^s_{i+1})/8\rceil \ge  \weight^{s-1}(S^s_{i+1})/8.
\end{align*}
Since $8 \le \weight^{s-1}(S^t_{i})$ and $\weight^{\starts(t,i)-1}(S^t_{i}) \le \weight^{s-1}(S^t_{i}) $
\begin{align*}
b^s_{i} = b^{\starts(t,i)}_{i} &= \lceil \weight^{\starts(t,i)-1}(S^t_{i})/8 \rceil \cr
&\le \weight^{s-1}(S^s_{i})/4.
\end{align*}
Hence,
\begin{align*}
\weight^{s-1}(S^s_{i+1}) = \lfloor  (\weight^{s-1}(S^s_{i}) - 2b^s_i )/ 2 \rfloor &\ge \lfloor \weight^{s-1}(S^s_{i})/4 \rfloor \cr &\ge \weight^{s-1}(S^s_{i})/8.
\end{align*}
Thus, $b^t_{i+1} \ge \weight^{s-1}(S^s_{i})/64 \ge (\weight^{s-1}(S^s_i) - \weight^{s-1}(S^s_{i+1}))/64$. The claim follows.
\end{proof}

\begin{corollary}\label{c-weight}
For any $t\in[n]$ and $i\in[\depth(t)-1]$, $8 + 64 \cdot \sum_{j=i+1}^{\depth(t)} b^t_{j} \ge \weight^{t-1}(S^t_i)$.
\end{corollary}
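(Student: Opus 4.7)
The plan is to derive the corollary from Lemma \ref{l-p1} by a simple telescoping argument across the hierarchy levels between $i$ and $\depth(t)$, combined with the termination condition of the rebuilding rule to control the weight of the deepest segment.

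Concretely, for any $i\in[\depth(t)-1]$ I would write the telescoped identity
\begin{align*}
\weight^{t-1}(S^t_i) \;=\; \weight^{t-1}(S^t_{\depth(t)}) + \sum_{j=i}^{\depth(t)-1} \bigl(\weight^{t-1}(S^t_j) - \weight^{t-1}(S^t_{j+1})\bigr).
\end{align*}
Each summand indexed by $j$ lies in the range $j\in[\depth(t)-1]$, so Lemma \ref{l-p1} applies and gives $\weight^{t-1}(S^t_j) - \weight^{t-1}(S^t_{j+1}) \le 64\, b^t_{j+1}$. Reindexing the resulting bound with $j'=j+1$ yields $\sum_{j=i}^{\depth(t)-1} 64\, b^t_{j+1} = 64 \sum_{j'=i+1}^{\depth(t)} b^t_{j'}$, which is precisely the sum appearing in the claimed inequality.

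It then remains to bound $\weight^{t-1}(S^t_{\depth(t)})$ by $8$. This is exactly what the Rebuilding Rule enforces: the \textbf{while} loop constructs a new segment only while the last-built segment has weight at least $8$, and upon exit sets $\depth(t) = i-1$. Hence the deepest segment satisfies $\weight^{t-1}(S^t_{\depth(t)}) < 8$. (In the corner case where the loop body never runs, the same strict inequality was the reason the loop exited immediately; and for $t<8$ the statement is vacuous because $\depth(t)=1$ forces $[\depth(t)-1]=\emptyset$.) Combining this with the telescoped bound gives $\weight^{t-1}(S^t_i) \le 8 + 64 \sum_{j=i+1}^{\depth(t)} b^t_j$, as required.

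There is no substantive obstacle here; the only items worth verifying are that the segments $S^t_i, S^t_{i+1},\ldots, S^t_{\depth(t)}$ invoked in the telescope all exist (immediate from $i\le\depth(t)-1$ and the definition of $\depth(t)$), and that the termination criterion of the rebuilding loop really pins down $\weight^{t-1}(S^t_{\depth(t)})<8$ uniformly in $t$.
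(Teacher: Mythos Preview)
Your proof is correct and is exactly the intended derivation: the paper states this as an immediate corollary of Lemma~\ref{l-p1} without spelling out the argument, and the telescoping sum together with the loop-exit condition $\weight^{t-1}(S^t_{\depth(t)})<8$ is precisely how it follows.
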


\begin{lemma}
If $\A$ is lazy then for any $t\in[n]$, $|Rel^t| \ge \sum_{i=p^t+1}^{\depth(t)} b^t_i$.
\end{lemma}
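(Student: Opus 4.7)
The plan is to control $|Rel^t|$ via the population of $B^t$ at time $t{-}1$ and then force $B^t$ to contain the left (or right) buffers of the successor segments below the critical one.

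First, laziness of $\A$ gives $(f^t)^{-1}(B^t) = Rel^t$; if some previously inserted $x \neq y^t$ had $f^{t-1}(x) \in B^t$ without being relocated, then $f^t(x) = f^{t-1}(x) \in B^t$ would force $x \in Rel^t$, a contradiction. Combined with $f^{t-1}(Rel^t \setminus \{y^t\}) \subseteq B^t$ from the definition of $B^t$, this yields $(f^{t-1})^{-1}(B^t) = Rel^t \setminus \{y^t\}$, and hence the identity $|Rel^t| = \weight^{t-1}(B^t) + 1$. I would also observe that $B^t \cap S^t_{\depth(t)} \ne \emptyset$: the adversary puts $y^t$ between the two middle keys of $\pop^{t-1}(S^t_{\depth(t)})$, so either one of them is in $Rel^t$ and its old label (which is in $S^t_{\depth(t)}$) lies in $B^t$, or both keep their labels and order preservation squeezes $f^t(y^t)$ between them, again in $S^t_{\depth(t)} \cap B^t$.

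Let $i^L$ (resp.\ $i^R$) be the largest $i$ with $\min B^t \ge \min S^t_i$ (resp.\ $\max B^t \le \max S^t_i$); nesting of the $S^t_i$ gives $p^t = \min(i^L, i^R)$, so WLOG $i^L = p^t$. Then $\min B^t < \min S^t_{p^t+1}$ while by the previous step $\max B^t \ge \min S^t_{\depth(t)}$, so the interval $B^t$ contains $[\min S^t_{p^t+1}, \min S^t_{\depth(t)} - 1]$, which decomposes into the pairwise disjoint left buffers $L_i := [\min S^t_{i-1}, \min S^t_i - 1]$ for $i \in [p^t + 2, \depth(t)]$. Since $S^t_i = \densify^{t-1}(S^t_{i-1}, b^t_{i-1})$, the $b^t_{i-1}$ smallest elements of $\pop^{t-1}(S^t_{i-1})$ lie in $L_i$, so $\weight^{t-1}(L_i) \ge b^t_{i-1}$. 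Summing disjointly,
\[
\weight^{t-1}(B^t) \;\ge\; \sum_{i=p^t+2}^{\depth(t)} b^t_{i-1} \;=\; \sum_{j=p^t+1}^{\depth(t)-1} b^t_j.
\]
Since the rebuilding loop exits with $\weight^{t-1}(S^t_{\depth(t)}) < 8$, $b^t_{\depth(t)} \le 1$, so combining with the identity,
\[
|Rel^t| \;\ge\; 1 + \sum_{j=p^t+1}^{\depth(t)-1} b^t_j \;\ge\; \sum_{i=p^t+1}^{\depth(t)} b^t_i.
\]
The degenerate cases ($t < 8$ or $p^t = \depth(t)$) are immediate from $|Rel^t| \ge 1$ and the target sum being $0$.

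The main obstacle is the off-by-one between the naturally arising buffer bound $b^t_{i-1}$ and the form $b^t_i$ required by the lemma, together with the fact that $B^t$ may miss part of the outermost buffer $L_{p^t+1}$; both gaps are absorbed by the $+1$ coming from $y^t$ in the laziness identity and by $b^t_{\depth(t)} \le 1$ at the bottom of the hierarchy.
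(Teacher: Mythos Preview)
Your argument follows the same line as the paper's: fix the side on which $B^t$ escapes $S^t_{p^t+1}$, observe that $B^t$ then swallows the left (or right) buffers of $S^t_{p^t+1},\dots,S^t_{\depth(t)-1}$, sum the buffer sizes to get $\sum_{j=p^t+1}^{\depth(t)-1} b^t_j$, and absorb $b^t_{\depth(t)}=1$ with the extra~$+1$. The paper phrases the count in terms of the key set $Y=\{y:\min(Rel^t)\le y<y^t\}\subseteq Rel^t$ rather than via your identity $|Rel^t|=\weight^{t-1}(B^t)+1$, but the content is identical.

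One technical slip: the justification ``$S^t_i = \densify^{t-1}(S^t_{i-1}, b^t_{i-1})$'' is false in general. The Rebuilding Rule only applies $\densify^{t-1}$ to levels $i>p^{t-1}$; for $i\le p^{t-1}$ the segment is copied from step $t-1$ and was created by $\densify^{s-1}$ at some earlier time $s$. What makes $\weight^{t-1}(L_i)\ge b^t_{i-1}$ true is that between the creation time $s$ of $S^t_i$ and time $t$ every busy segment $B^{t'}$ (and hence every relabel and every new label $f^{t'}(y^{t'})$) stayed inside $S^{t'}_{p^{t'}}\subseteq S^{t'}_i=S^t_i$, so the population of $L_i=[\min S^t_{i-1},\min S^t_i-1]$ is unchanged from time $s-1$ to time $t-1$. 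The paper's one-line ``by the definition of $S^t_{i+1}$'' hides the same argument but at least does not assert the wrong time index.
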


\begin{proof}
If $p^t = \depth(t)$ then the lemma is trivial. If $p^t = \depth(t)-1$ then the lemma is trivial as well
since $|Rel^t|\ge 1$ and $b^t_{\depth(t)}=1$ always. So let us assume that $p^t < \depth(t)-1$.
By the definition of $p^t$ we know that at least one of the following must happen: $f^{t-1}(\min(Rel^t)) \in S^t_{p^t} \setminus S^t_{p^t+1}$, $f^{t}(\min(Rel^t)) \in S^t_{p^t} \setminus S^t_{p^t+1}$, $f^{t-1}(\max(Rel^t)) \in S^t_{p^t} \setminus S^t_{p^t+1}$ or $f^{t}(\max(Rel^t)) \in S^t_{p^t} \setminus S^t_{p^t+1}$. Assume that $f^{t-1}(\min(Rel^t)) \in S^t_{p^t} \setminus S^t_{p^t+1}$ or $f^{t}(\min(Rel^t)) \in S^t_{p^t} \setminus S^t_{p^t+1}$,
the other case is symmetric. Let $Y = \{ y \in \{y^1,y^2,\dots,y^{t-1}\};\; \min(Rel^t) \le y < y^t\}$. 
Since $\A$ is lazy,  $Y \subseteq Rel^t$. $S^t_{p^t+1} \setminus S^t_{\depth(t)}$ is the union of two subsegments, the left one $S_L$ and the right one $S_R$.
The population of $S_L$ at time $t-1$ must be contained in $Y$.
For any $i\in [\depth(t)-1]$, the population of the left subsegment of $S^t_i \setminus S^t_{i+1}$ at time $t-1$ is at least $b^t_i$, by the definition of $S^t_{i+1}$. 
Hence, $\sum_{i=p^t+1}^{\depth(t)-1} b^t_i \le |\pop^{t-1}(S_L)| \le |Y| < |Rel^t|$. Since $b^t_{\depth(t)}=1$, the lemma follows.
\end{proof}

\begin{corollary}\label{c-main}
Let $\A$ be a lazy algorithm. Then $64 \cdot \chi_\A(y^1,y^2,\dots,y^n) + 8n \ge \sum_{t=1}^n \weight^{t-1}(S^t_{p^t})$.
\end{corollary}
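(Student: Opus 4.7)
The plan is to combine the two previous results in a term-by-term fashion and then sum over $t$. Specifically, I would show that for every $t \in [n]$ we have the single-step inequality
\[
\weight^{t-1}(S^t_{p^t}) \le 8 + 64 \cdot |Rel^t|,
\]
and then sum over $t=1,\dots,n$ to obtain $\sum_{t=1}^n \weight^{t-1}(S^t_{p^t}) \le 8n + 64 \cdot \chi_\A(y^1,\dots,y^n)$, which is exactly the claim.

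To establish the single-step inequality, I would distinguish two cases depending on whether $p^t < \depth(t)$ or $p^t = \depth(t)$. In the first case, Corollary~\ref{c-weight} (applied at $i = p^t$, which is in the admissible range $[\depth(t)-1]$) gives
\[
\weight^{t-1}(S^t_{p^t}) \le 8 + 64 \sum_{j=p^t+1}^{\depth(t)} b^t_j,
\]
and the preceding lemma (on $|Rel^t|$) says $\sum_{j=p^t+1}^{\depth(t)} b^t_j \le |Rel^t|$, so combining the two yields the desired bound.

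In the second case $p^t = \depth(t)$, I would use the termination condition of the rebuilding loop: the loop constructs $S^t_{i+1}$ only while $\weight^{t-1}(S^t_i) \ge 8$, so the final segment $S^t_{\depth(t)}$ satisfies $\weight^{t-1}(S^t_{\depth(t)}) \le 7$ (for $t \ge 8$), and for $t<8$ the only segment in the hierarchy is $[m]$ with weight $t-1 \le 6$. In either subcase $\weight^{t-1}(S^t_{p^t}) \le 7 < 8 \le 8 + 64|Rel^t|$, so the same inequality holds.

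I do not expect any real obstacle: the argument is a direct consequence of stitching together the inequality of the previous lemma with Corollary~\ref{c-weight} and then handling the corner case $p^t = \depth(t)$ by recalling how the rebuilding loop terminates. The only minor subtlety worth being explicit about is verifying that Corollary~\ref{c-weight} is applicable at $i = p^t$ precisely when $p^t < \depth(t)$, which is why the two-case split is needed.
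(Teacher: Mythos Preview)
Your argument is correct and is exactly the derivation the paper intends: Corollary~\ref{c-main} is stated without proof precisely because it follows by combining Corollary~\ref{c-weight} (at $i=p^t$) with the preceding lemma on $|Rel^t|$ term-by-term and summing, with the boundary case $p^t=\depth(t)$ handled by the loop-termination bound $\weight^{t-1}(S^t_{\depth(t)})<8$. There is nothing to add.
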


\section{Prefix Bucketing}

A prefix bucketing of $n$ items into $k$ buckets is a sequence $a^0,a^1,\dots,a^n\in \N^k$ of bucket configurations
satisfying: $a^0 = (0,0,\dots,0)$ and for $t=1,2,\dots,n$, there exists $p^{t} \in [k]$ such that
\begin{enumerate}
\item $a^t_i = a^{t-1}_i$, for all $i=1,2,\dots,p^{t}-1$, 
\item $a^t_{p^{t}} = 1+ \sum_{i\ge p^{t}} a^{t-1}_i$, and
\item $a^t_i = 0$, for all $i=p^{t}+1,\dots,k$.
\end{enumerate}

The \emph{cost of the bucketing $a^0,a^1,\dots,a^n$} is $c(a^0,a^1,\dots,a^n) = \sum_{t=1}^n a^t_{p^{t}}$. In Section \ref{s-lbb}
we prove the following lemma.

\begin{lemma}\label{l-lbb}
Let $n \ge 2^{32}$ and $k$ be integers where $\log n \le k \le \sqrt[4]{n}/8$.
The cost of any prefix bucketing of $n$ items into $k$ buckets is greater than $\frac{n \log n}{8 (\log 8k - \log \log n)} - n$.
\end{lemma}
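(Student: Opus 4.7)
The plan is to encode each prefix bucketing as a rooted tree whose structure captures the merging history, and then reduce the cost lower bound to a question about the shape of such trees. Given a bucketing with critical levels $p^1,\dots,p^n$, I would build a tree with one internal node $v_t$ of \emph{type} $p^t$ for each step $t$ and one leaf per item. The children of $v_t$ are the leaf for item $t$ together with, for each $i\in\{p^t,p^t{+}1,\dots,k\}$ for which bucket $i$ is non-empty immediately before step $t$, the node $v_s$ where $s<t$ is the most recent step with $p^s=i$. The resulting tree is \emph{$k$-admissible} in the sense that (a) along every root-to-leaf path the types of internal nodes form a non-decreasing sequence in $[k]$, and (b) the non-leaf children of a type-$p$ node carry pairwise distinct types in $[p,k]$. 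Since $a^t_{p^t}$ equals the number of leaves in the subtree rooted at $v_t$, summing over $t$ gives
\[
c(a^0,\dots,a^n)=\sum_{t=1}^n a^t_{p^t}=\sum_{\text{leaves }\ell}\depth(\ell),
\]
where $\depth(\ell)$ counts the internal-node ancestors of $\ell$.

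Next I would prove, by induction on $d$, that every $k$-admissible tree of depth at most $d$ has at most $\binom{k+d-1}{k}$ leaves. The induction step sums the leaf contributed directly by the root together with the inductive counts on its at-most-$k$ non-leaf subtrees, each of depth $\leq d-1$ and each carrying a distinct admissible ``starting type'', and collapses via a hockey-stick identity. A direct consequence is that any $k$-admissible tree with $n$ leaves has depth at least $d^\star=g_k(n)$, the smallest $d$ with $\binom{k+d-1}{k}\geq n$.

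To turn this depth bound into a cost lower bound I would observe that for every $d'$ the number of leaves at depth greater than $d'$ is at least $n-\binom{k+d'-1}{k}$, hence
\[
\sum_{\text{leaves }\ell}\depth(\ell)\;\geq\;(d'+1)\Bigl(n-\binom{k+d'-1}{k}\Bigr).
\]
Choosing $d'$ a constant factor below $d^\star$, so that $\binom{k+d'-1}{k}$ is a small fraction of $n$ in the stated regime, gives total cost $\Omega(n\,d^\star)$. Finally, combining $\binom{k+d-1}{k}\leq(e(k+d-1)/k)^{k}$ with the assumptions $\log n\leq k\leq n^{1/4}/8$ yields $d^\star\geq \log n/\bigl(O(1)\cdot(\log 8k-\log\log n)\bigr)$; plugging this into the previous displayed inequality delivers the claimed bound, with the additive $-n$ in the statement absorbing the loss from the ``small-depth'' subtraction.

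The main obstacle I foresee is pinning down the precise $k$-admissible definition and the matching binomial recursion in the depth-size induction: a type-$p$ root is allowed to have a child of the same type $p$ (because bucket $p^t$ can itself be non-empty immediately before step $t$), so the induction must carry a ``minimum permissible type'' parameter down the tree and the accounting must not double-count. The remaining pieces---verifying that the cost identity is exactly $\sum\depth(\ell)$ and estimating $g_k(n)$ to the stated precision---are routine combinatorial bookkeeping, and the $-n$ slack in the conclusion comfortably absorbs the rounding losses from the tree encoding and from approximating $g_k(n)$ by elementary logarithmic expressions.
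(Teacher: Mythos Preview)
Your plan is essentially the paper's proof: encode the bucketing as a tree, show it is $k$-admissible, bound the size of a depth-$d$ $k$-admissible tree by $\binom{k+d-1}{k}$, and convert the resulting depth lower bound into a cost lower bound. Your tree with separate internal nodes $v_t$ and pendant leaves is just the paper's tree (one node per item) with a leaf hung from each node; your identity $\sum_t a^t_{p^t}=\sum_\ell \depth(\ell)$ is exactly the paper's $c(\bar a)=c(T)$, and the ``same-type child'' issue you flag is precisely the recursive clause in the paper's definition of $k$-admissibility.

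Two small corrections. First, after $n$ steps several buckets may still be nonempty, so your construction produces a \emph{forest}, not a single tree; the paper fixes this by appending one final merge into bucket $1$ (the modified bucketing $\bar a'$), at extra cost at most $n-1$, and this is exactly where the $-n$ in the statement comes from. Second, the estimate $\binom{k+d-1}{k}\le (e(k+d-1)/k)^k$ is useless here: for $k\ge\log n$ the right side always exceeds $n$. You need the other form $\binom{k+d-1}{d-1}\le (e(k+d-1)/(d-1))^{d-1}$, which is what the paper's entropy calculation in Lemma~\ref{lm:lower_bound_d} is doing.

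The one genuine difference is how the depth bound becomes a cost bound. The paper proves by the same induction that every $k$-admissible tree of depth $d$ satisfies $c(T)\ge d\,|T|/2$, which gives the result in one line. Your layering inequality $c\ge (d'+1)\bigl(n-\binom{k+d'-1}{k}\bigr)$ also works, but it needs $\binom{k+d'-1}{k}$ to be a bounded \emph{fraction} of $n$, not merely $<n$; you get this by taking $d'=d^\star-2$ and using that $\binom{k+d-1}{k}/\binom{k+d-2}{k}=(k+d-1)/(d-1)\ge 2$ whenever $d-1\le k$, which holds here since $d^\star\le\log n\le k$. Either route yields the stated constant.
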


We want to relate the cost of online labeling to some prefix bucketing. We will build a specific prefix
bucketing as follows. Set $k=\lceil \log m\rceil$.
For a lazy online labeling algorithm $\A$ and $t=1,\dots,n$, let $f^t,S^t_i,B^t,p^{t},y^t$ and $f^0,p^0$ be as 
defined by the {\bf Adversary($\A,n,m,r$)} and the algorithm $\A$. Denote $Y=\{y^1,y^2,\dots,y^n\}$.
For $t=0,1,\dots,n$ and $i=1,\dots,k$, define a sequence of sets $A^t_i \subseteq Y$ as follows:
for all $i=1,\dots,k$, $A^0_i = \emptyset$, and for $t>0$:
\begin{itemize}
\item $A^t_i = A^{t-1}_i$, for all $i=1,\dots,p^{t}-1$, 
\item $A^t_{p^{t}} = \{y^t\} \cup \bigcup_{i\ge p^{t}} A^{t-1}_i$, and
\item $A^t_i = \emptyset$, for all $i=p^{t}+1,\dots, k$.
\end{itemize} 

The following lemma relates the cost of online labeling to a prefix bucketing.
 
\begin{lemma}\label{l-b2l}
Let the prefix bucketing $a^0,a^1,\dots,a^n$ be defined by $a^t_i = |A^t_i|$, for all $t=0,\dots,n$ and $i=1,\dots,k$.
The cost of the bucketing $a^0,a^1,\dots,a^n$ is at most $64 \cdot \chi_\A(y^1,y^2,\dots,y^n) + 9n$.
\end{lemma}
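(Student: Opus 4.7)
The plan is to verify that the sequence $a^0, a^1, \ldots, a^n$ is a valid prefix bucketing (which is immediate from the set-level definition of $A^t_i$ together with $a^t_i = |A^t_i|$) and then bound each per-step cost $a^t_{p^t}$ by $1 + \weight^{t-1}(S^t_{p^t})$. Summing over $t$ and invoking Corollary \ref{c-main} will then give $c(a^0, \ldots, a^n) \le n + 64\cdot \chi_\A(y^1,\ldots,y^n) + 8n$, which is the claimed bound.

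The core of the argument is the invariant
\begin{equation*}
I_t(i) := \bigcup_{j \geq i} A^t_j \ \subseteq \ \pop^t(S^t_i) \qquad \text{for every } t \ge 0 \text{ and every } i \leq p^t,
\end{equation*}
proved by induction on $t$. The recurrence defining $A^t$ yields $I_t(i) = \{y^t\} \cup I_{t-1}(i)$ when $i \le p^t$ and $I_t(i) = \emptyset$ otherwise, so in the inductive step one only has to show $\{y^t\} \cup I_{t-1}(i) \subseteq \pop^t(S^t_i)$. The newcomer $y^t$ lies in $Rel^t$, so $f^t(y^t) \in B^t \subseteq S^t_{p^t} \subseteq S^t_i$; the content is therefore in transporting $I_{t-1}(i)$ across one algorithm step.

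This is where a case split on $i$ versus $p^{t-1}$ enters. If $p^{t-1} < i$, then step $t-1$ of the bucketing reset $A^{t-1}_j$ to $\emptyset$ for every $j > p^{t-1}$, so $I_{t-1}(i) = \emptyset$ and the containment is vacuous. Otherwise $i \leq p^{t-1}$, and the preservation rule gives $S^t_i = S^{t-1}_i$. Combining the inductive hypothesis $I_{t-1}(i) \subseteq \pop^{t-1}(S^{t-1}_i)$ with the analysis of a single step of $\A$ finishes this case: an unrelabeled $y \in I_{t-1}(i)$ keeps its label, so $f^t(y) \in S^{t-1}_i = S^t_i$; any relabeled $y \in Rel^t \cap I_{t-1}(i)$ has $f^t(y) \in B^t \subseteq S^t_{p^t} \subseteq S^t_i$ by the definition of $B^t$.

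With the invariant established, an essentially identical case split on $p^t$ versus $p^{t-1}$ (again using that $I_{t-1}(p^t) = \emptyset$ when $p^t > p^{t-1}$ and preservation $S^t_{p^t} = S^{t-1}_{p^t}$ when $p^t \le p^{t-1}$) produces the time-shifted version $I_{t-1}(p^t) \subseteq \pop^{t-1}(S^t_{p^t})$, so $a^t_{p^t} = 1 + |I_{t-1}(p^t)| \le 1 + \weight^{t-1}(S^t_{p^t})$. Summing and applying Corollary \ref{c-main} completes the argument. The main obstacle is choosing the right invariant: the naive version quantified over $i \le \depth(t)$ cannot be carried across an algorithm step, because the freshly rebuilt segments $S^t_i$ for $p^{t-1} < i$ bear no a priori population-theoretic relation to $S^{t-1}_i$. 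The fix is to quantify only over $i \leq p^t$ and exploit that the bucketing itself clears deep buckets, rendering the awkward range $p^{t-1} < i \leq p^t$ automatically empty.
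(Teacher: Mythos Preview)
Your proof is correct and follows essentially the same route as the paper: establish by induction on $t$ an invariant tying the bucket sets to populations of the hierarchy segments, deduce $|A^t_{p^t}|\le 1+\weight^{t-1}(S^t_{p^t})$, and sum via Corollary~\ref{c-main}. The paper packages the invariant slightly differently (its Lemma~\ref{l-p2} tracks each $A^t_i$ individually against $f^{t-1}$ for all $i\le\depth(t)$, rather than your tail unions $I_t(i)$ against $f^t$ for $i\le p^t$), but the inductive mechanics and the case split on $p^{t-1}$ versus $i$ are the same.
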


Lemmas \ref{l-lbb} and \ref{l-b2l} together imply Lemma \ref{l-main}. The following lemma will be used to prove Lemma  \ref{l-b2l}.

\begin{lemma}\label{l-p2}
For any $t\in[n]$ and $i\in [\depth(t)]$,  if $i\not=p^t$ then $f^{t-1}(A^t_i) \subseteq S^t_i$ otherwise
$f^{t-1}(A^t_i \setminus \{y^t\}) \subseteq S^t_i$.
\end{lemma}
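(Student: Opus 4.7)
The plan is to proceed by induction on $t$, with base case $t = 0$ being immediate since every $A^0_i = \emptyset$. The key preparatory step is to strengthen the inductive hypothesis by advancing the labeling from $f^{t-2}$ to $f^{t-1}$: whenever the lemma is known at time $t-1$, I claim that $f^{t-1}(A^{t-1}_j) \subseteq S^{t-1}_j$ holds for every $j$ (with no exclusion of $y^{t-1}$). For $j > p^{t-1}$ the bucket $A^{t-1}_j$ is empty by the bucketing update rule; for $j \le p^{t-1}$, any key in $A^{t-1}_j \cap Rel^{t-1}$ has its new label in $B^{t-1}$, which by the definition of $p^{t-1}$ and nestedness of the hierarchy satisfies $B^{t-1} \subseteq S^{t-1}_{p^{t-1}} \subseteq S^{t-1}_j$, while an unrelocated key simply keeps its $f^{t-2}$-label, already placed in $S^{t-1}_j$ by the hypothesis. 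The newly inserted key $y^{t-1}$ (the one the hypothesis at $t-1$ excludes) is in $Rel^{t-1}$, so the same argument gives $f^{t-1}(y^{t-1}) \in B^{t-1} \subseteq S^{t-1}_{p^{t-1}}$.

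With this strengthened form in hand, the inductive step at time $t$ reduces to a short case analysis on $i \in [\depth(t)]$ that mirrors the prefix-bucketing update. If $i > p^t$ then $A^t_i = \emptyset$ and there is nothing to check. If $i < p^t$ then $A^t_i = A^{t-1}_i$; in the subcase $i > p^{t-1}$ this set is empty, and in the subcase $i \le p^{t-1}$ the preservation rule of the adversary yields $S^t_i = S^{t-1}_i$, so the strengthened hypothesis delivers $f^{t-1}(A^t_i) \subseteq S^{t-1}_i = S^t_i$. Finally, when $i = p^t$ we have $A^t_{p^t} \setminus \{y^t\} = \bigcup_{j \ge p^t} A^{t-1}_j$, and only the $j$ in the range $[p^t, p^{t-1}]$ contribute a nonempty term; for each such $j$ the preservation rule gives $S^t_j = S^{t-1}_j$, nestedness gives $S^t_j \subseteq S^t_{p^t}$ (because $j \ge p^t$), and the strengthened hypothesis gives $f^{t-1}(A^{t-1}_j) \subseteq S^{t-1}_j$; chaining these containments yields $f^{t-1}(A^{t-1}_j) \subseteq S^t_{p^t}$, and taking the union over $j$ completes the step.

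The main obstacle is purely one of bookkeeping: three time-indexed structures evolve in parallel, namely the buckets $A^t_i$ (moving under the prefix-bucketing merge rule), the hierarchy segments $S^t_i$ (moving under preservation or rebuilding above level $p^{t-1}$), and the labelings $f^{t-2}$ versus $f^{t-1}$. Isolating the label-advancing step ahead of the case analysis keeps things clean, and the whole argument ultimately rests on a single structural fact: every relocation at step $t-1$ stays inside the critical segment $S^{t-1}_{p^{t-1}}$ at that step.
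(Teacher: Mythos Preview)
Your proposal is correct and follows essentially the same approach as the paper: induction on $t$, exploiting that all relocated labels at step $t-1$ land inside $B^{t-1}\subseteq S^{t-1}_{p^{t-1}}$, together with the preservation rule $S^t_i=S^{t-1}_i$ for $i\le p^{t-1}$. The only difference is organizational: the paper splits the inductive step by comparing $p^{t-1}$ with $p^t$ and folds the passage from $f^{t-2}$ to $f^{t-1}$ into each case, whereas you first isolate that label-advancing step (obtaining $f^{t-1}(A^{t-1}_j)\subseteq S^{t-1}_j$ with no exclusion) and then do a clean case split on the position of $i$ relative to $p^t$; the underlying argument is identical.
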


\begin{proof}  
We prove the claim by induction on $t$. For $t=1$, the only non-empty set is $A^2_1=\{y^1\}$ so the claim is true. 
Let assume that it is true for $t-1$ and we prove it for $t$. 
The {\bf Adversary($\A,n,m,r$)} produces the sets $S^t_i$ and $y^t$, and then the algorithm $\A$ outputs $f^t$.
Based on it the adversary defines $B^t, Rel^t$ and $p^t$.
We distinguish two cases.

\emph{Case $p^{t-1} < p^t$:} For all $i\le p^{t-1}$, $A^{t}_i = A^{t-1}_i$, and for all $i> p^{t-1}$, if $i\not=p^t$ then $A^t_i=\emptyset$
otherwise $A^t_i = \{y^t\}$. For all $i\le p^{t-1}$, $B^{t-1} \subseteq S^{t-1}_{p^{t-1}} \subseteq S^{t-1}_i = S^t_i$, where
the first containment follows from the definition of $p^{t-1}$ and the last equality follows from the definition of $S^t_i$.
For all $i<p^{t-1}$, $y^{t-1} \not\in A^{t-1}_i = A^t_i$ so using the induction hypothesis 
$f^{t-1}(A^t_i) \subseteq f^{t-2}(A^t_i) \cup B^{t-1} \subseteq S^{t-1}_i = S^t_i$.
Similarly for $i=p^{t-1}$, $f^{t-1}(A^t_i) \subseteq f^{t-2}(A^t_i \setminus \{y^{t-1}\}) \cup B^{t-1} \subseteq S^{t-1}_i = S^t_i$.
For each $i>p^{t-1}$, either $A^t_i = \emptyset$ or $i=p^t$ and $A^t_i=\{y^t\}$. In either case the lemma follows trivially.

\emph{Case $p^{t-1} \ge p^t$} is similar: For all $i < p^{t}$, $A^{t}_i = A^{t-1}_i$, for $i=p^t$, $A^t_i = \{y^t\} \cup \bigcup_{j\ge p^t} A^{t-1}_j$,
and for $i>p^t$, $A^t_i=\emptyset$. Again, for all $i\le p^{t}$, $B^{t-1} \subseteq S^{t-1}_{p^{t-1}} \subseteq S^{t-1}_i = S^t_i$.
For all $i<p^{t}$, $y^{t-1} \not\in A^{t-1}_i = A^t_i$ so using the induction hypothesis 
$f^{t-1}(A^t_i) \subseteq f^{t-2}(A^t_i) \cup B^{t-1} \subseteq S^{t-1}_i = S^t_i$.
It only remains to consider the case of $i=p^t$ as the claim is trivial for $i>p^t$.
Since $p^{t-1} \ge p^t$, for $i>p^t$, $S^{t-1}_i \subseteq S^t_{p^t}$.
Using the induction hypothesis, $f^{t-1}(A^t_{p^t} \setminus \{y^t\}) \subseteq f^{t-1}(\bigcup_{j\ge p^t} A^{t-1}_j)
\subseteq \bigcup_{j\ge p^t} f^{t-2}( A^{t-1}_j \setminus \{y^{t-1}\}) \cup B^{t-1} \subseteq S^t_{p^t}$. The lemma follows.
\end{proof}

\begin{proofof}{Lemma \ref{l-b2l}}
Using the previous lemma we see that for $t\in [n]$, $|A^t_{p^t}| \le \weight^{t-1}(S^t_{p^t}) + 1$.
The lemma follows by Corollary \ref{c-main}.
\end{proofof}

\subsection{Lower Bound for Bucketing}\label{s-lbb}

In this section we will prove Lemma \ref{l-lbb}. To do so we will associate with each prefix bucketing 
a $k$-tuple of ordered rooted trees. 
We prove a lower bound on the sum of depths of the nodes of the trees, and this
will imply a lower bound for the cost of the bucketing.

An {\em ordered rooted tree} is a rooted tree where the children of each node are ordered from left to right.
Since these are the only trees we consider, we refer to them simply as trees.
The \emph{leftmost principal subtree of a  tree $T$} is the subtree rooted in the leftmost child of the root of $T$, the \emph{$i$-th principal subtree of $T$} is the 
tree rooted in the $i$-th child of the root from the left. If the root has less than $i$ children, we consider the $i$-th principal subtree to be empty.
The number of nodes of $T$ is called its {\em size}
and is denoted $|T|$. The {\em depth} of a node is  one more than its distance to the root, i.e., the root has depth 1. The cost
of $T$, denoted $c(T)$, is the sum of depths of its nodes. The cost and size of an empty tree is defined to be zero.

We will be interested in trees that satisfy the following condition.
\begin{definition}[$k$-admissible]
\label{def:k-admissibility}
Let $k$ be a non-negative integer. A tree $T$ is \emph{$k$-admissible} if it contains at most one vertex or
\begin{itemize}
\item its leftmost principal subtree is $k$-admissible and
\item the tree created by removing the leftmost principal subtree from $T$ is $(k - 1)$-admissible.
\end{itemize}
\end{definition}
Notice that when a tree is $k$-admissible it is also $k'$-admissible, for any $k'>k$. The following
easy lemma gives an alternative characterization of $k$-admissible trees: 
\begin{lemma}
\label{lm:k-admissibility}
A rooted ordered tree $T$ is $k$-admissible if and only if for each $i \in [k]$, the $i$-th principal subtree of $T$ is $(k - i + 1)$-admissible.
\end{lemma}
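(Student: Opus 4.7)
The plan is to prove the equivalence by induction on $k$, unfolding the recursive clause of Definition \ref{def:k-admissibility} exactly once in each direction. The key bookkeeping identity driving both directions is that, for every $i \geq 1$, the $i$-th principal subtree of $T \setminus T_1$ equals the $(i+1)$-th principal subtree of the original $T$, since removing the leftmost principal subtree shifts the remaining children one position to the left.

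For the forward direction, assume $T$ is $k$-admissible. If $|T| \leq 1$ then every principal subtree of $T$ is empty and hence $(k-i+1)$-admissible by the base case of the definition, so the conclusion holds trivially. Otherwise the recursive clause asserts both that $T_1$ is $k$-admissible, which is precisely the $i = 1$ instance of the target, and that $T \setminus T_1$ is $(k-1)$-admissible. Applying the inductive hypothesis (for $k - 1$) to $T \setminus T_1$ gives that its $i'$-th principal subtree is $(k - i')$-admissible for each $i' \in [k - 1]$, and reindexing $i = i' + 1$ via the identity above yields that $T_i$ is $(k - i + 1)$-admissible for $i \in [2, k]$, completing the step.

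For the backward direction I simply run this argument in reverse. The $i = 1$ hypothesis gives $T_1$ is $k$-admissible, and the remaining $k - 1$ hypotheses, after shifting indices by one, are exactly the assumptions that the inductive hypothesis (for $k - 1$) requires in order to conclude that $T \setminus T_1$ is $(k - 1)$-admissible. The recursive clause then assembles these two facts into $k$-admissibility of $T$. I expect the main obstacle to be notational rather than conceptual: keeping the index shift $i \mapsto i + 1$ straight when passing between $T$ and $T \setminus T_1$, and handling the case where $T$ has fewer than $k$ children, in which the ``missing'' principal subtrees must be recognized as empty trees and shown admissible directly from the base case.
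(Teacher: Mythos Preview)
Your proposal is correct and follows essentially the same approach as the paper: induction on $k$, with the base case $|T|\le 1$ handled trivially and the inductive step driven by the index shift identifying the $(i-1)$-st principal subtree of $T\setminus T_1$ with the $i$-th principal subtree of $T$. The only cosmetic difference is that the paper runs both directions simultaneously as a single if-and-only-if chain, whereas you split them into a forward and a backward direction.
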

\begin{proof}
We show both directions at once by induction on $k$.
For a single-vertex tree the statement holds. Let $T$ be a tree with at least 2 vertices and let
$L$ be its leftmost principal subtree and $T'$ be the subtree of $T$ obtained by removing $L$.
By definition $T$ is $k$-admissible if and only if $L$ is $k$-admissible and $T'$ is
$k-1$ admissible, and by induction on $k$, $T'$ is $k-1$ admissible  if and only if for each $2 \leq i \leq k$
the $i$-th principal subtree of $T$, which is the $(i-1)$-st principal subtree of $T'$
is $(k-i+1)$-admissible.
\end{proof}

We will assign a $k$-tuple of trees $T(\bar{a})_1,T(\bar{a})_2,\dots,T(\bar{a})_k$ to each prefix bucketing $\bar{a} = a^0,a^1,\dots,a^t$. The assignment is
defined inductively as follows. The bucketing $\bar{a} = a^0$ gets assigned the $k$-tuple of empty trees. For bucketing $\bar{a} = a^0,a^1,\dots,a^t$
we assign the trees as follows. Let $p^t$ be as in the definition of prefix bucketing, so $0 \not= a^t_{p^t} \not= a^{t-1}_{p^t}$ and for all $i>p^t$, $a^t_i = 0$.
Let $\bar{a}'=a^0,a^1,\dots,a^{t-1}$.
Then we let $T(\bar{a})_i = T(\bar{a}')_i$, for $1\le i < p^t$, and $T(\bar{a})_i$ be the empty tree, for $p^t < i \le k$. The tree $T(\bar{a})_{p^t}$ consists of
a root node whose children are the non-empty trees among $T(\bar{a}')_{p^t},T(\bar{a}')_{p^t+1},\dots,T(\bar{a}')_k$ ordered left to right by the increasing index.

We make several simple observations about the trees assigned to a bucketing. 

\begin{proposition}\label{p-ts}
For any positive integer $k$, if $\bar{a} = a^0,a^1,\dots,a^t$ is a prefix bucketing into $k$ buckets
then for each $i\in [k]$, $|T(\bar{a})_i| = a^t_i$.
\end{proposition}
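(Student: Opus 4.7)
The plan is to prove the statement by straightforward induction on $t$, the length of the bucketing. In the base case $t=0$, every $T(\bar{a})_i$ is defined to be the empty tree, so $|T(\bar{a})_i| = 0 = a^0_i$, matching the initial configuration.

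For the inductive step, assume the claim holds for the bucketing $\bar{a}' = a^0, a^1, \ldots, a^{t-1}$, so that $|T(\bar{a}')_i| = a^{t-1}_i$ for every $i \in [k]$. I would then split into three cases according to the value of $i$ relative to $p^t$. For $i < p^t$, both the tree and the bucket configuration are copied from step $t-1$, so $|T(\bar{a})_i| = |T(\bar{a}')_i| = a^{t-1}_i = a^t_i$. For $i > p^t$, both $T(\bar{a})_i$ and $a^t_i$ are defined to be empty/zero, so the equality is immediate.

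The only substantive case is $i = p^t$. Here $T(\bar{a})_{p^t}$ is built as a new root whose children are the non-empty trees among $T(\bar{a}')_{p^t}, T(\bar{a}')_{p^t+1}, \ldots, T(\bar{a}')_k$, ordered by increasing index. Since the empty trees contribute nothing to the size, the total size is
\[
|T(\bar{a})_{p^t}| = 1 + \sum_{j = p^t}^{k} |T(\bar{a}')_j|.
\]
Applying the induction hypothesis to each term yields $|T(\bar{a})_{p^t}| = 1 + \sum_{j \geq p^t} a^{t-1}_j$, which by the definition of prefix bucketing equals $a^t_{p^t}$. This completes the induction.

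There is no real obstacle here; the proposition is essentially just bookkeeping that confirms the tree construction mirrors the bucket dynamics. The only thing to be careful about is that empty principal subtrees of $T(\bar{a}')_j$ for $j \geq p^t$ are dropped when forming the children of the new root, but since these contribute zero to the size the sum is unaffected.
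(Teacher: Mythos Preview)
Your proof is correct and follows exactly the approach indicated in the paper, which simply states that the proposition follows by a simple induction on $t$. Your write-up spells out the three cases ($i<p^t$, $i=p^t$, $i>p^t$) carefully and matches the construction precisely; there is nothing to add.
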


The proposition follows by a simple induction on $t$.

\begin{proposition}\label{p-ta}
For any positive integer $k$, if $\bar{a} = a^0,a^1,\dots,a^t$ is a prefix bucketing into $k$ buckets then
for each $i\in [k]$, $T(\bar{a})_i$ is $(k+1-i)$-admissible.
\end{proposition}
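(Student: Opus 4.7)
The plan is to prove the proposition by induction on the length $t$ of the bucketing sequence. The base case $t=0$ is immediate because every $T(\bar{a})_i$ is empty and an empty tree is vacuously $k'$-admissible for every $k'$ (it has no vertices, so the ``at most one vertex'' clause of Definition \ref{def:k-admissibility} is satisfied).

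For the inductive step, fix $t \geq 1$, let $\bar{a}' = a^0,a^1,\dots,a^{t-1}$, and let $p^t$ be the critical index. I would split into three cases according to the definition of $T(\bar{a})_i$. For $i < p^t$, we have $T(\bar{a})_i = T(\bar{a}')_i$, so the inductive hypothesis gives the required $(k+1-i)$-admissibility directly. For $i > p^t$, $T(\bar{a})_i$ is empty and hence trivially $(k+1-i)$-admissible. The substantive case is $i = p^t$.

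For the case $i = p^t$, I would apply the alternative characterization of admissibility from Lemma \ref{lm:k-admissibility}: it suffices to show that for each $j \in [k+1-p^t]$ the $j$-th principal subtree of $T(\bar{a})_{p^t}$ is $(k-p^t-j+2)$-admissible. Let $i_1 < i_2 < \cdots < i_\ell$ be the indices in $\{p^t, p^t+1,\dots,k\}$ for which $T(\bar{a}')_{i_s}$ is nonempty, so that by construction the $j$-th principal subtree of $T(\bar{a})_{p^t}$ is $T(\bar{a}')_{i_j}$ when $j \leq \ell$ and empty otherwise. In the empty case the required admissibility is trivial. In the non-empty case, since the $i_s$ are strictly increasing and $i_1 \geq p^t$, we have $i_j \geq p^t + j - 1$. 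By the inductive hypothesis $T(\bar{a}')_{i_j}$ is $(k+1-i_j)$-admissible, and since $k+1-i_j \leq k-p^t-j+2$, monotonicity of admissibility (noted right after Definition \ref{def:k-admissibility}) yields that $T(\bar{a}')_{i_j}$ is also $(k-p^t-j+2)$-admissible, as required.

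The main subtlety, which is really the only thing to watch, is the indexing bookkeeping when passing through Lemma \ref{lm:k-admissibility}: one must match up the $j$-th child of the root of $T(\bar{a})_{p^t}$ with $T(\bar{a}')_{i_j}$ using the fact that empty trees are skipped in the ordering, and then translate the bound $i_j \geq p^t + j - 1$ into the admissibility parameter $k-p^t-j+2$. Once this is written carefully, the induction closes.
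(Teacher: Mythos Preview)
Your proof is correct and follows the same approach the paper sketches: the paper merely asserts that the proposition ``follows by induction on $t$ and the definition of $k$-admissibility,'' and your argument is a faithful fleshing-out of that sketch, with the natural choice of invoking Lemma~\ref{lm:k-admissibility} (the equivalent characterization) to handle the case $i=p^t$ cleanly rather than unwinding the recursive definition directly.
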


Again this proposition follows by induction on $t$ and the definition of $k$-admissibility. The next lemma relates
the cost of bucketing to the cost of its associated trees.

\begin{lemma}\label{l-tc}
For any positive integer $k$, if $\bar{a} = a^0,a^1,\dots,a^t$ is a prefix bucketing into $k$ buckets then
 $\sum_{i=1}^k c(T(\bar{a})_i) = c(\bar{a})$.
\end{lemma}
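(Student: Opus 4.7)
The plan is to prove the identity by induction on $t$, with the empty bucketing $\bar{a} = a^0$ as the base case where both sides are zero. For the inductive step, fix $p = p^t$ and write $\bar{a}' = a^0, a^1, \ldots, a^{t-1}$; I would like to show that the trees absorb exactly the incremental bucketing cost $a^t_p$ at step $t$.

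From the definition of the assignment of trees, $T(\bar{a})_i = T(\bar{a}')_i$ for $i < p$ and $T(\bar{a})_i$ is the empty tree for $i > p$, so those coordinates contribute the same (or zero) to the tree cost on both sides and only the new tree $T(\bar{a})_p$ requires analysis. By construction, $T(\bar{a})_p$ is obtained by placing a fresh root above the non-empty trees among $T(\bar{a}')_p, T(\bar{a}')_{p+1}, \ldots, T(\bar{a}')_k$. Creating a new root above a forest raises the depth of every pre-existing node by exactly $1$ and adds $1$ for the root itself, so
\[
c(T(\bar{a})_p) \;=\; 1 + \sum_{j=p}^{k} \bigl( c(T(\bar{a}')_j) + |T(\bar{a}')_j| \bigr).
\]
Applying Proposition~\ref{p-ts} to replace $|T(\bar{a}')_j|$ by $a^{t-1}_j$ and using the defining recursion $a^t_p = 1 + \sum_{j \geq p} a^{t-1}_j$, the right-hand side simplifies to $a^t_p + \sum_{j=p}^{k} c(T(\bar{a}')_j)$.

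Putting the pieces together,
\[
\sum_{i=1}^{k} c(T(\bar{a})_i) \;=\; \sum_{i < p} c(T(\bar{a}')_i) + c(T(\bar{a})_p) \;=\; a^t_p + \sum_{i=1}^{k} c(T(\bar{a}')_i),
\]
where the indices $i > p$ contribute zero on both sides. By the induction hypothesis the second sum equals $c(\bar{a}')$, hence the total is $a^t_p + c(\bar{a}') = c(\bar{a})$, as required.

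There is essentially no serious obstacle here: the entire argument is a bookkeeping induction. The only point to handle carefully is the identity $c(T(\bar{a})_p) - \sum_{j=p}^k c(T(\bar{a}')_j) = 1 + \sum_{j=p}^k |T(\bar{a}')_j|$, i.e.\ that attaching a common root above a forest increases the total depth by the size of the forest plus one; once this is observed, Proposition~\ref{p-ts} and the bucketing recursion immediately identify this quantity with $a^t_p$, which is exactly the cost contributed by step $t$ to $c(\bar{a})$.
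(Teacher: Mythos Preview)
Your proof is correct and follows essentially the same approach as the paper: induction on $t$, with the key identity $c(T(\bar{a})_p) = 1 + \sum_{j\ge p}\bigl(c(T(\bar{a}')_j) + |T(\bar{a}')_j|\bigr)$ coming from the fact that adding a new root raises every depth by one, combined with Proposition~\ref{p-ts} and the bucketing recursion. The only cosmetic difference is that the paper starts from $c(\bar{a})$ and unwinds toward the tree costs, whereas you start from the tree costs and arrive at $c(\bar{a})$.
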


\begin{proof}
By induction on $t$. For $t=0$ the claim is trivial. Assume that the claim is true for $t-1$ and we will prove it for $t$.
Let $\bar{a}'=a^0,a^1,\dots,a^{t-1}$ and $p^t$ be as in the definition of prefix bucketing.
\begin{align*}
c(\bar{a}) = c(\bar{a}') + 1 + \sum_{i=p^t}^k a^{t-1}_i 
&= \sum_{i=1}^k c(T(\bar{a}')_i) + 1 + \sum_{i=p^t}^k |T(\bar{a}')_i| \cr
&= \sum_{i=1}^{p^t-1} c(T(\bar{a})_i) + 1 + \sum_{i=p^t}^k \left(c(T(\bar{a}')_i + |T(\bar{a}')_i|\right) 
\end{align*}
where the first equality follows by the induction hypothesis and by Proposition \ref{p-ts}, and the last equality follows
by the definition of $T(\bar{a})_i$, for $i=1,\dots,p^t-1$.
For $i\ge p^t$, the depth of each node in $T(\bar{a}')_i$ increases by one when it becomes the child of $T(\bar{a})_{p^t}$ hence
\begin{align*}
c(T(\bar{a})_{p^t}) =  1 + \sum_{i=p^t}^k \left(c(T(\bar{a}')_i + |T(\bar{a}')_i|\right).
\end{align*}
For $i > p^t$, $c(T(\bar{a})_i) = 0$ so the lemma follows.
\end{proof}

Now, we lower bound the cost of any ordered rooted tree.

\begin{lemma}

\label{lm:k-d-complete}
Let $k,d \geq 1$ and $T$ be a $k$-admissible tree of depth $d$.
Then $c(T) \geq d \cdot |T| / 2$ and $|T| \leq \binom{k + d - 1}{k}$.
\end{lemma}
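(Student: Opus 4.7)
}
My plan is to prove both inequalities simultaneously by induction on $k+d$, using the recursive definition of $k$-admissibility directly (rather than Lemma \ref{lm:k-admissibility}), which gives a clean two-way split of $T$ that aligns nicely with Pascal's identity. The base case is $d=1$: then $T$ is a single vertex, so $|T|=1=\binom{k}{k}$ and $c(T)=1\ge d|T|/2=1/2$. For the inductive step with $d\ge 2$, write $T$ as the disjoint union of its leftmost principal subtree $L$ and the residual tree $T'=T\setminus L$ (which keeps the root of $T$ and all non-leftmost principal subtrees). By definition $L$ is $k$-admissible with depth $d_L\le d-1$, while $T'$ is $(k-1)$-admissible with depth $d_{T'}\le d$, and $d=\max(d_L+1,d_{T'})$. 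Both pieces have strictly smaller $k+d$, so the inductive hypothesis applies.

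For the size bound, the hypothesis yields
$|L|\le\binom{k+d_L-1}{k}\le\binom{k+d-2}{k}$ and
$|T'|\le\binom{(k-1)+d_{T'}-1}{k-1}\le\binom{k+d-2}{k-1}$.
Adding these and using Pascal's identity
$\binom{k+d-2}{k}+\binom{k+d-2}{k-1}=\binom{k+d-1}{k}$
gives $|T|=|L|+|T'|\le\binom{k+d-1}{k}$ immediately.

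For the cost bound, attaching $L$ to the root of $T$ raises the depth of every node of $L$ by exactly one while leaving the depths of $T'$ unchanged, so $c(T)=c(T')+c(L)+|L|$. The inductive hypothesis then gives
$c(T)\ge \tfrac{1}{2}d_{T'}|T'|+\tfrac{1}{2}d_L|L|+|L|=\tfrac{1}{2}d_{T'}|T'|+\tfrac{1}{2}(d_L+2)|L|$,
and the task reduces to checking that this quantity is at least $\tfrac{1}{2}d(|L|+|T'|)$. I will split into the two cases $d=d_L+1$ and $d=d_{T'}$. In the first case the $+|L|$ term exactly cancels the deficit $(d-d_L-2)|L|/2$ in $L$'s contribution, and I must argue that the remaining deficit $(d-d_{T'})|T'|/2$ is controlled; in the second case the roles are symmetric.

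The step I expect to be hardest is precisely this combination inequality in the cost bound: a naive sum-and-compare works when $d$ is realised in both $L$ and $T'$, but when one of them is markedly shallower than $d$, the slack has to be recovered from the $+|L|$ additive term and from the structural constraint that the shallower piece, being $(k-1)$-admissible (resp.\ $k$-admissible), cannot simultaneously be both shallow and very wide. I expect to need a short auxiliary calculation bounding the width of a shallow admissible tree (essentially a small-depth instance of the size bound just proved) in order to close the gap cleanly.
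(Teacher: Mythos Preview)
Your decomposition is exactly the paper's: induct on $k+d$, split $T$ into its leftmost principal subtree $L$ and the residual $R=T\setminus L$, get the size bound via Pascal, and use $c(T)=c(L)+|L|+c(R)$ for the cost bound. (Your base case $d=1$ differs from the paper's $k=1$, but that is cosmetic.) The paper writes the cost step as the single line
\[
c(T)=c(L)+|L|+c(R)\ \ge\ \tfrac{d-1}{2}\,|L|+|L|+\tfrac{d}{2}\,|R|\ \ge\ \tfrac{d}{2}\,|T|,
\]
which tacitly uses \emph{both} $\depth(L)=d-1$ and $\depth(R)=d$. You correctly notice that only $\max(\depth(L)+1,\depth(R))=d$ is guaranteed, so at most one of these equalities need hold, and you plan an auxiliary argument to recover the deficit.

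That plan cannot succeed, because the cost inequality $c(T)\ge d\,|T|/2$ is simply false for $k$-admissible trees. Take $k=3$ and let $T$ consist of a root with three children $c_1,c_2,c_3$, where $c_2,c_3$ are leaves and below $c_1$ hangs a path of eight further vertices. Each principal subtree is a path or a single vertex, so $T$ is $3$-admissible; it has $|T|=12$, depth $d=10$, and
\[
c(T)=1+3\cdot 2+\sum_{j=3}^{10} j = 59 < 60 = \tfrac{d\,|T|}{2}.
\]
Here $R$ is the root with two leaves, $2$-admissible of depth $2$, and no width bound on a shallow $(k-1)$-admissible tree can supply the missing $(d-2)|R|/2=12$ units the induction requires. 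So the ``hardest step'' you flagged is in fact an impossible one; the paper's one-line argument has the same gap. What Lemma~\ref{l-lbb} actually needs is only a lower bound on $c(T)$ in terms of $|T|$ and $k$ (equivalently, in terms of the \emph{minimum} depth $g_k(|T|)$ of a $k$-admissible tree on $|T|$ vertices), not in terms of the actual depth of $T$; reformulating the cost half of the lemma in that weaker form is the way to make the induction go through.
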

\begin{proof}
We prove the lemma by induction on $k + d$. Assume first that $k = 1$ and $d \geq 1$.
The only $1$-admissible tree $T$ of depth $d$ is a path of $d$ vertices. Hence,
$|T| = d = \binom{1 + d - 1}{1}$ and $c(T)=\sum_{i = 1}^d i  = d \cdot (d + 1) / 2$.

Now assume that $k > 1$ and that $T$ is $k$-admissible.
Denote the leftmost principal subtree of $T$ by $L$ and the tree created by removing $L$ from $T$ by $R$.
By the induction hypothesis and definition of $k$-admissibility it follows that $|T| = |L| + |R| \leq \binom{k + d - 2}{k} + \binom{k + d - 2}{k - 1} = \binom{k + d - 1}{k}$.
Furthermore, $c(T) = c(L) + |L| +  c(R) \geq ((d-1) \cdot |L|  / 2) + |L| + (d \cdot (|T| - |L|) / 2) \geq d \cdot |T| / 2$.
\end{proof}

\begin{lemma}
\label{lm:lower_bound_d} 
Let $n,k$ and $d$ be integers such that $2^{32} \leq n$, $\log n \le k \le \sqrt[4]{n}/8$, and 
$d \le \frac{\log n}{4 (\log 8k - \log \log n)}$. Then ${k+d \choose k} < n$. 
\end{lemma}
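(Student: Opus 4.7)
The plan is to apply the standard binomial estimate $\binom{k+d}{d} \le (e(k+d)/d)^d$, take binary logarithms, and verify by direct substitution that the hypothesis on $d$ forces $\log\binom{k+d}{k}$ strictly below $\log n$. To set notation, I would write $L = \log 8k - \log\log n$ so the hypothesis reads $d \le (\log n)/(4L)$. The first observation is that since $k \ge \log n$, one has $\log 8k \ge 3 + \log\log n$ and hence $L \ge 3$; combined with the hypothesis this yields $d \le (\log n)/12$, and together with $k \ge \log n$ gives $d \le k$. In particular $k+d \le 2k$, so $e(k+d)/d \le 8k/d$ (using $2e < 8$).

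The next step is the inequality itself. The case $d = 0$ is trivial since $\binom{k}{k} = 1 < n$, so assume $d \ge 1$; then
$$\log\binom{k+d}{k} = \log\binom{k+d}{d} \le d \log\!\bigl(e(k+d)/d\bigr) \le d(\log 8k - \log d).$$
I would then note that the right-hand side, as a function of $d$, is increasing on the interval $(0, 8k/e)$, a range that contains all admissible values of $d$, so it suffices to verify the bound at the extreme $d_0 = (\log n)/(4L)$. Substituting and rewriting with $\log 8k = L + \log\log n$ and $\log d_0 = \log\log n - 2 - \log L$ gives
$$\log\binom{k+d}{k} \le \frac{\log n}{4L}(L + 2 + \log L) = \frac{\log n}{4}\left(1 + \frac{2 + \log L}{L}\right).$$

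To finish, one needs $1 + (2+\log L)/L < 4$, i.e.\ $\log L < 3L - 2$, which is immediate for $L \ge 3$ (for instance, $\log L \le L \le 3L-2$ whenever $L \ge 1$). This yields $\log\binom{k+d}{k} < \log n$ as required. The main obstacle, if there is one, is keeping the nested logarithms straight through the substitution and making sure the monotonicity argument covers the full admissible range of $d$; neither step uses the upper bound $k \le \sqrt[4]{n}/8$, which I expect is only needed to make $L$ small enough to be useful in the surrounding lemmas. Everything else reduces to routine estimates.
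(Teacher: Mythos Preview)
Your proof is correct and follows the same overall strategy as the paper: upper-bound $\binom{k+d}{d}$, take logarithms, and substitute the extremal value of $d$. The paper uses the entropy estimate $\binom{a}{b}\le 2^{H(b/a)a}$ together with $H(x)<-2x\log x$, arriving at a bound of the form $2d\log(2k/d)$; you use the more elementary $\binom{k+d}{d}\le(e(k+d)/d)^d$ and get $d\log(8k/d)$. The substantive difference is in the substitution step. The paper fixes $d=\lfloor d_0\rfloor$ and, because $d$ then appears in a denominator, needs a lower bound on it; this is where the hypothesis $k\le\sqrt[4]{n}/8$ enters, to guarantee $\lfloor d_0\rfloor\ge d_0/2$. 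Your monotonicity argument lets you evaluate the bound directly at the real number $d_0$, so that hypothesis is never invoked --- your observation that the lemma holds without the upper bound on $k$ is correct.
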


\begin{proof} 
First notice that the expression ${k+d \choose k}$ is increasing in $d$.
Therefore to prove the lemma it suffices to set $d = \left\lfloor \frac{\log n}{4 (\log 8k - \log \log n)} \right\rfloor$ 
and show that ${k +d \choose k} < n$.  For this particular choice of $d$ it also holds that $d < \log n \leq k$. 

To estimate the binomial coefficient we use the fact that for two integers integers $a, b$ 
such that $0 < b \leq a/2$ it holds that ${a \choose b} \leq 2^{H(b/a) a}$
where $H(x) = - x \log x - (1-x) \log (1-x)$ is the binary entropy function.
For $0<x<1/2$, one can bound $H(x) < -2x \log x$. 

Since ${k+d \choose k}= {k+d \choose d}$ and $d < (k + d)/2$ we conclude that
\begin{align*}
{k+d \choose k} = {k+d \choose d} \leq 2^{H\left(\frac{d}{k + d}\right)(k+d)} < 2^{-2d\log\left(\frac{d}{k + d}\right)}.
\end{align*}

From the assumption $k \leq \sqrt[4]{n}/8$ it follows 
that $$\left\lfloor \frac{\log n}{4 (\log 8k - \log \log n)} \right\rfloor \ge \frac{\log n}{8 (\log 8k - \log \log n)}.$$
By substituting for $d$ we get
\begin{align*}
\log {k + d \choose k}
        & < 2d \log\left(\frac{k + d}{d}\right) \cr
        & < 2d \log\left(\frac{2k}{d}\right) \cr
        & \le 2 \cdot \frac{\log n}{4(\log 8k - \log \log n)} \cdot \log \left(\frac{16k (\log 8k - \log \log n)}{\log n}\right) \cr
        & = \frac{\log n}{2} \cdot \frac{\log 2 + \log 8k - \log \log n + \log (\log 8k - \log \log n)}{\log 8k - \log \log n} \cr
        & \le \frac{\log n}{2} \cdot \left(\frac{1}{3} + 1 + \frac{\log 3}{3}\right) \cr
        & < \log n.
\end{align*}

In the next to last inequality we use the fact that $\log 8k - \log \log n \geq 3$ and that $\frac{\log x}{x}$ is decreasing 
when $x \geq 3$.
\end{proof}

\begin{proofof}{Lemma \ref{l-lbb}}
Consider a prefix bucketing $\bar{a} = a^0,a^1,\dots,a^t$ of $n$ items into $k$ buckets, 
where $\log n \le k \le \sqrt[4]{n}/8$. Let $a'^t = (n,0,0,\dots,0)$ be a $k$-tuple of integers and
let  $\bar{a}' = a^0,a^1,\dots,a^{t-1},a'^t$. Clearly, $\bar{a}'$ is also a prefix bucketing and $c(\bar{a}') \le c(\bar{a}) + n-1$. Hence, it suffices to 
show that $c(\bar{a}') \ge  \frac{n \log n}{8 (\log 8k - \log \log n)}$. Let $T$ be $T(\bar{a}')_1$. By Proposition \ref{p-ts}, $|T|=n$, and by Proposition \ref{p-ta}, $T$ is $k$-admissible.
Furthermore, by Lemma \ref{l-tc},  $c(\bar{a}) = \sum_{i=1}^k c(T(\bar{a})_i) = c(T)$. So we only need to lower bound $c(T)$. By Lemma \ref{lm:k-d-complete}
a $k$-admissible tree has size at most $\binom{k + d - 1}{k}$, where $d$ is its depth. For $d \le  \frac{\log n}{4 (\log 8k - \log \log n)}$,  $\binom{k + d - 1}{k} \le \binom{k + d}{k} < n$
by Lemma \ref{lm:lower_bound_d} so $T$ must be of depth $d > \frac{\log n}{4 (\log 8k - \log \log n)}$.
By Lemma \ref{lm:k-d-complete}, $c(T) \ge \frac{n \log n}{8 (\log 8k - \log \log n)}$. The lemma follows.
\end{proofof}

\subsection*{Acknowledgements}

The authors would like to thank V\'aclav Koubek for interesting and helpful discussions on this subject.

\newcommand{\etalchar}[1]{$^{#1}$}

\end{document}